\newtheorem{defi}{Definition}
\newtheorem{prop}{Proposition}
\newtheorem{corol}{Corollary}
\begin{document}
\title{Mutually unbiased frames}

\author{Fabian Caro P\'erez}
\affiliation{Departamento de F\'{i}sica, Facultad de Ciencias B\'{a}sicas, Universidad de Antofagasta, Casilla 170, Antofagasta, Chile}
\email{fabian.caro.perez@ua.cl}
\orcid{0000-0001-9861-2491}
\author{Victor Gonz\'alez Avella}
\affiliation{Departamento de F\'{i}sica, Facultad de Ciencias B\'{a}sicas, Universidad de Antofagasta, Casilla 170, Antofagasta, Chile}
\email{victor.gonzalez.avella@ua.cl}
\orcid{0000-0003-2633-6146}
\author{Dardo Goyeneche}
\affiliation{Departamento de F\'{i}sica, Facultad de Ciencias B\'{a}sicas, Universidad de Antofagasta, Casilla 170, Antofagasta, Chile}
\email{dardo.goyeneche@uantof.cl}
\orcid{0000-0002-9865-4226}
\homepage{https://www.quiga.cl/}

\date{May 11, 2022}
\maketitle

\begin{abstract}
In this work, the concept of mutually unbiased frames is introduced as the most general notion of unbiasedness for sets composed by linearly independent and normalized vectors. It encompasses the already existing notions of unbiasedness for orthonormal bases, regular simplices, equiangular tight frames, positive operator valued measure, and also includes symmetric informationally complete quantum measurements. After introducing the tool, its power is shown by finding the following results about the last mentioned class of constellations: \textit{(i)} real fiducial states do not exist in any even dimension, and \textit{(ii)} unknown $d$-dimensional fiducial states are parameterized, a priori, with roughly $3d/2$ real variables only, without loss of generality. Furthermore, multi-parametric families of pure quantum states having minimum uncertainty with regard to several choices of $d+1$ orthonormal bases are shown, in every dimension $d$. These last families contain all existing fiducial states in every finite dimension, and the bases include maximal sets of $d+1$ mutually unbiased bases, when $d$ is a prime number.
\end{abstract}

\section{Introduction}
Finite tight frames \cite{W18,BF03} are a natural generalization of orthonormal bases \cite{KC07}. These frames play a central role in quantum physics, as they are one-to-one related to  rank-one Positive Operator Valued Measures (POVM), a generalized notion of measurements in quantum mechanics. Tight frames find applications in signal theory \cite{BYP04, PY97}, coding and communication \cite{SH03}, design of polarimeters that maximize the signal-to-noise ratio, and reduction of error propagation \cite{T02,CK13}. Optimal ways to implement quantum mechanical protocols typically require to consider sets of quantum measurements induced by vectors with special geometrical properties, e.g. mutually unbiased bases (MUB) \cite{I81,WF89} and symmetric informationally complete (SIC)-POVM \cite{R04}. These constellations are relevant to quantum state estimation \cite{WF89,AS10}, entanglement detection \cite{SHBAH12}, quantum key distribution \cite{MDGGMPKPLF13}, and quantum-nonlocality \cite{BCPSW14,TFRBK21}, among others. On the other hand, sets of frames are relevant in quantum state tomography \cite{PBJD10}, quantum cloning \cite{JRD10} and quantum state discrimination \cite{RSHK11}. 

Along the last years, the notion of mutually unbiasedness has be extended to mutually unbiased simplices \cite{FS20,S19}, mutually unbiased equiangular tight frames \cite{FM20}, and mutually unbiased POVM \cite{BBBCHT13}. Motivated by trying to understand the tree from the exploration of the forest, the most general notion of mutually unbiasedness for constellations composed by linearly independent vectors is introduced, namely \emph{mutually unbiased frames} (MUF). This notion allows to study some geometrical problems from a general perspective, revealing interesting features that are not simple to see otherwise. For instance, a series of new results about SIC-POVM, holding in every finite dimension where they exist, can be found in a simple way when considering MUF. This tool might contribute to unlock hidden symmetries in further classes of constellations, as well as to allow the design of potential practical applications, see Section \ref{sec:sic}). \medskip

This work is organized as follows: in Section \ref{sec:preliminaries}, the basic notions required to understand the work are introduced. In Section \ref{sec:MUF}, our main tool is presented and some basic results are derived. In Section \ref{sec:matrixMUF}, a matrix approach to MUF is shown, specially focusing on circulant matrices. Through this study, both the mutually unbiased bases problem in prime dimension $d$ and the SIC-POVM problem in every dimension $d$ can be seen as $d$ MUF, encoded in $d$ circulant matrices. As a further result, the most general pair of MUF for a qubit system is displayed, emphasizing the fact that both MUB and SIC-POVM designs arise as particular solutions. In Section \ref{sec:sic}, the MUF formalism is applied to the SIC-POVM problem, thus deriving the most important results of this work. Here, the Zauner's conjecture about the existence of Weyl-Heisenberg covariant SIC-POVM is shown to be equivalent to the simplest way to study $d$ MUF in dimension $d$, assuming that circulant matrices define the simplest class of commuting matrices to deal with. Also, it is proven that real fiducial states, with respect to the standard Weyl-Heisenberg (WH) group, do not exist in any even dimension. Moreover, WH covariant fiducial states belong to an explicitly given $\lfloor(d-1)/2\rfloor+d-1$ dimensional set, in every dimension $d$ where they exist. Finally, fiducial states are shown to define minimum uncertainty states regarding a large number of subsets of $d+1$ bases, induced by Clifford unitaries.

\section{Preliminaries}\label{sec:preliminaries}
In this section, the required notions to understand the work are introduced. Firstly, pure quantum states, denoted in kets as $|\psi\rangle$, are normalized vectors in $\mathbb{C}^d$ with regard to the norm induced by the standard scalar product. That is, $\|\psi\|^2=\langle\psi|\psi\rangle=\sum_{j=1}^d|\langle j|\psi\rangle|^2=1$, where $|j\rangle$ denotes the $j$th element of the canonical basis, also called computational basis in quantum information theory. Here, $|\psi\rangle$ is an element in $\mathcal{H}_d$, $\langle\phi|$ an element in the dual space $\mathcal{H}_d^*$, $\langle\phi|\psi\rangle$ and $|\phi\rangle\langle\psi|$ the inner and outer products, respectively. Besides, the symbol $|\phi\rangle\langle\phi|$ denotes the rank-one projector associated to the direction $|\phi\rangle$. The first relevant ingredient is the concept of frames  \cite{DS52}, defined as follows:
\begin{defi}
A set of $n\geq d$ pure states $\{|\phi_i\rangle\}_{i=1,\dots,n}$ in $\mathcal{H}_d$ defines a frame if constants $0<A\leq B<\infty$ exist such that
\begin{equation}\label{frame}
A\leq \sum_{i=0}^{n-1}|\langle x |\phi_i\rangle|^2\leq B,\qquad\mbox{for all }|x\rangle\in\mathcal{H}_d.
\end{equation}
\end{defi}
Here, $A$ and $B$ are called the lower and upper bounds of the frame, given by the minimal and maximal eigenvalues of the \emph{frame operator} $S=\sum_{i=0}^{n-1}|\phi_i\rangle \langle\phi_i|$, respectively \cite{CK13}. A frame is called \emph{tight} if $A$ equals $B$, having in such case that $A=B=n/d$. When $A=B=1$, a Parseval tight Frame arises, equivalent to orthonormal bases in finite dimensions. There is a simple way to detect whether a given set of vectors forms a frame: linearly independence of $d$ out of $n\geq d$ vectors is a necessary and sufficient condition to have a frame in $\mathcal{H}_d$. A gentle introduction to finite frames theory can be found in Ref. \cite{C13}.

A special kind of tight frames occurs when its $n$ vectors are equiangular, known as \emph{equiangular tight frames} (ETF) \cite{STDH07}, meaning that there is a constant $c>0$ such that $|\langle\phi_i|\phi_j\rangle|^2=c$, for every $i\not=j$, with $i,j\in\{0,\dots,n-1\}$. Here, the number $\sqrt{c}$ is typically called \emph{coherence}, but for convenient reasons, throughout this work the constant $c$ is called the \emph{overlap} of a MUF. See \cite{FM15} for a summary of the current state of the art of the ETF problem.

Among all constellations in the Hilbert space $\mathcal{H}_d$, there is a specially distinguished one: mutually unbiased bases \cite{BSTW05}.
\begin{defi}
Two orthonormal bases  $\{|\phi_j\rangle\}_{j=0,\dots,d-1}$ and $\{|\psi_k\rangle\}_{k=0,\dots,d-1}$ defined in $\mathcal{H}_d$ are unbiased if $|\langle\phi_j|\psi_k\rangle|^2=\frac{1}{d}$, for every $j,k=0,\dots,d-1$. Also, a set of $m$ orthonormal bases are MUB if they are pairwise MUB.
\end{defi}
For instance, the eigenvectors bases of the three Pauli matrices define  $m=3$ MUB in $\mathcal{H}_2$. More generally, let \begin{equation}\label{XZ}
X=\sum_{k=0}^{d-1}|k+1\rangle\langle k| \mbox{ and } Z=\sum_{k=0}^{d-1}\omega^k|k\rangle\langle k|,
\end{equation}
the \emph{shift} and \emph{phase} operators, respectively, where $\omega=e^{2\pi i /d}$. Therefore, the eigenvectors bases of the operators $$X,Z,XZ,XZ^2,\dots,XZ^{d-1},$$ form $m=d+1$ MUB in $\mathcal{H}_d$, whenever $d$ is a prime number. Here, and in the rest of the work, it is assumed addition modulo $d$ in kets, i.e. $|(d-1)+1\rangle=|0\rangle$.

MUB have been extensively studied for the last 40 years. A maximal set of $d+1$ MUB exists in every prime \cite{I81} and prime power \cite{WF89} dimension $d$. On the other hand, the maximal number of MUB existing in any other composite dimension $d$ is still an open problem, even in the lowest dimensional case $d=6$, where at most triplets of MUB are known \cite{DEBZ10,RLE11,BBELTZ07,G13}. 

Another remarkable constellations are the SIC-POVM \cite{Z99,RBSC04,SG10}, defined as follows:
\begin{defi}
A set of $n=d^2$ pure states $\{|\phi_j\rangle\}\subset\mathcal{H}_d$ forms a SIC-POVM if $|\langle\phi_j|\phi_k\rangle|^2=\frac{1}{d+1}$, for every $j\not=k$.
\end{defi}
SIC-POVM are known to exist in dimensions $d=2-53$ and in several higher dimensions \cite{G21}, being $d=19603$ achieved recently \cite{ABHGM21}. Also, numerical solutions are known in dimensions $d=2-193$ and some other dimensions up to 2208 \cite{S17,GS17}. In very low dimensions, fiducial states are simple to find. However, as the dimension increases, even the problem to find numerical solutions becomes hard.  On the one hand, the difficulty to solve the MUB and SIC-POVM problems have motivated researchers to find solution to its extensions to the space of full-rank density matrices, see \cite{KG14} and \cite{GK14}, respectively. On the other hand, the rank-one SIC-POVM problem has triggered the search of additional symmetries in fiducial pure states. In short, every known Weyl-Heisenberg fiducial state is eigenvector of an order 3 Clifford operator \cite{A05}. A detailed description of symmetries in fiducial states can be found here \cite{Z99,SG10,S17,A05,BW19}. Among the entire set of SIC-POVM, there is a distinguished one: the \emph{Hoggar lines} \cite{H98}. These solutions for 3-qubit systems are characterized by  fiducial states that are covariant with respect to the tensor product of single qubit WH groups. The 240 fiducials existing in this case can be divided in two classes, distinguished by the fact that they have different amounts of three-partite entanglement \cite{CGZ18}.

The symmetries mentioned above   allowed the construction of every known fiducial state. However, there is \emph{no single proof} for the veracity of such symmetries, beyond its remarkable success. In Section \ref{sec:sic}, it is provided an important cornerstone towards this direction by \textit{revealing an elusive symmetry} that holds for any existing fiducial state. Precisely, the number of free parameters of any fiducial state can be reduced from $2(d-1)$  to $\lfloor (d-1)/2\rfloor+d-1$, in any dimension $d\geq2$. 

To conclude this section, it is remarked the connection existing between the SIC-POVM problem and algebraic number theory, including a close relation to the 12$th$ Hilbert problem \cite{AFMY17}.

\section{Mutually unbiased frames}\label{sec:MUF}
The notion of mutually unbiasedness for constellations of vectors has been extensively used in quantum mechanics. One of the reasons is because \textit{two Von Neumann observables are complementary if and only if its eigenvectors bases are mutually unbiased}, e.g. position and momentum or orthogonal directions of spin $1/2$ observables. Mutually unbiasedness has been extended to regular simplices \cite{FS20,S19}, equiangular tight frames \cite{FM20} and POVM in general \cite{BBBCHT13}. In this section, a notion of unbiasedness that includes all the above notions is introduced. The aim to present this generalization is to study the existence and construction of some inequivalent geometrical structures under the same framework, e.g. MUB and SIC-POVM. Despite our generalization goes beyond the set of quantum measurements, the new point of view reveals interesting properties when restricting the attention to POVM, as shown in Section \ref{sec:sic}. The central concept of our work is defined as follows:
\begin{defi}\label{defi:MUF}
Let $\{|\phi_j\rangle\}_{j=0,\dots,n_1-1}$ and $\{|\psi_k\rangle\}_{k=0,\dots,n_2-1}$ be two frames in $\mathcal{H}_d$. They are called unbiased if there is a constant $c>0$ such that \begin{equation}\label{MUF}
|\langle\phi_j|\psi_k\rangle|^2=c,
\end{equation}
holds for every $j=0,\dots,n_1-1$ and $k=0,\dots,n_2-1$. A set of $m$ frames is called mutually unbiased (MUF) if they are pairwise unbiased. The constant value $c$ is called the overlap of the MUF.
\end{defi}
\begin{figure}[t]
\centering
\includegraphics[width=0.5\textwidth]{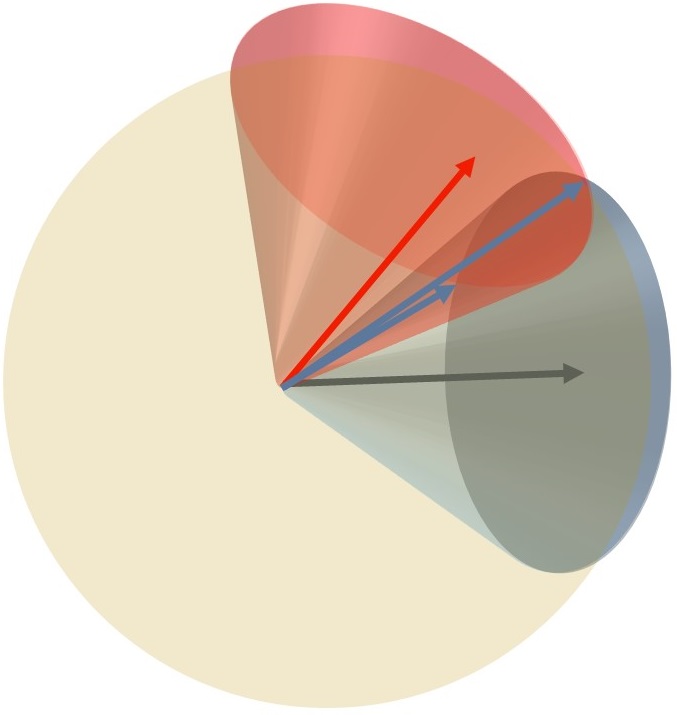}
\caption{(color online) Construction of two mutually unbiased frames composed by two vectors each for a qubit system in the Bloch sphere. Given a frame in $\mathbb{C}^2$, associated to two normalized vectors on the surface of the Bloch sphere, red and gray vectors in the figure, one can always define another frame that is mutually unbiased to the above one as follows: construct two cones with axes given by the red and grey vectors, respectively. The apertures of the cones have to be identical and large enough to have intersection of the cones. The unbiased frame is given by vectors lying at the intersection of the cones, i.e. the blue vectors. When the red and gray vectors are anti-parallel then there is a unique way to define intersecting cones, i.e.  the aperture of each cone is equal to $\pi$, and both cones become the same disk. In such case, any point in the border of the disk is unbiased to the given basis. In this way, 3 MUB arise as 3 orthogonal lines.}
\label{Fig1}
\end{figure}
Fig.\ref{Fig1} shows an intuitive construction of two MUF for a qubit system. The notion of MUF reduces to mutually unbiased POVM \cite{BBBCHT13} when both frames are tight, for any $n_1$ and $n_2$. Additionally, if $n_1=n_2=d$, then they are MUB. Figure \ref{Fig2} illustrates the relation existing between different 
notions of unbiasedness.

 At first glance, a set of $m$ MUF might look as a collection of equiangular lines. However, it is important to remark that the MUF structure has additional constraints, as the following example shows. Suppose that $4$ MUF exist in dimension $3$, each of them having $3$ elements, and arranged along 4 rows as follows: 
\begin{equation*}
\begin{array}{cccc}
a_1&a_2&a_3&\quad\mbox{$\leftarrow$ Frame 1}\\
b_1&b_2&b_3&\quad\mbox{$\leftarrow$ Frame 2}\\
c_1&c_2&c_3&\quad\mbox{$\leftarrow$ Frame 3}\\
d_1&d_2&d_3&\quad\mbox{$\leftarrow$ Frame 4}
\end{array}
\end{equation*}
Clearly, $\{a_i,b_i,c_i,d_i\}$, is a set of 4 equiangular lines for any $i=1,2,3$, which might tempt one to claim that the above set of three equiangular lines is equivalent to a set of four MUF. However, the MUF constraint establishes that two elements of the above arrangement belonging to different rows and different columns (e.g. $a_1$ and $b_2$), also have the same overlap $c$. Indeed, the MUF constraint is equivalent to say that  $\{a_i,b_j,c_k,d_l\}$ forms a set of equiangular lines, for any $i,j,k,l=1,\dots,3$. This is an interesting combinatorial extension of the equiangular lines problem.

Given that all sub-normalized rank-one projectors associated to a tight frame sum up to the identity, it is simple to show that $c=1/d$ holds in (\ref{MUF}) when both frames are tight \cite{BBBCHT13}. However, if the frames are not tight, the constant $c$ might take another value, as shown below.
\begin{prop}\label{prop:muf}
Let $\{|\phi^{(\ell)}_j\rangle\}$ be a set of $m$ mutually unbiased frames in $\mathcal{H}_d$ composed by $n$ elements each, with overlap $c$. Also, assume that each frame has lower and upper frame bounds $A_{\ell}$ and $B_{\ell}$, respectively, where $\ell=0,\dots,m-1$. Therefore, 
 $\max_{\ell} A_{\ell} \leq cn\leq \min_{\ell}B_{\ell}$.
\end{prop}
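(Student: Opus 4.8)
The plan is to exploit a feature that is built into Definition \ref{defi:MUF} but easy to overlook: the elements of any frame are normalized pure states, hence admissible test vectors $|x\rangle$ in the defining inequality (\ref{frame}) of any \emph{other} frame. First I would fix an index $\ell\in\{0,\dots,m-1\}$ and select some other frame $\ell'\neq\ell$, which is always possible because a collection of MUF contains at least two frames. For any element $|\phi^{(\ell')}_k\rangle$ of frame $\ell'$, normalization $\langle\phi^{(\ell')}_k|\phi^{(\ell')}_k\rangle=1$ makes it a legitimate choice of $|x\rangle$ in (\ref{frame}) applied to frame $\ell$, which gives
\begin{equation*}
A_\ell\leq\sum_{j=0}^{n-1}|\langle\phi^{(\ell')}_k|\phi^{(\ell)}_j\rangle|^2\leq B_\ell.
\end{equation*}

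The central observation is that the middle sum is completely determined by the unbiasedness condition. Since frames $\ell$ and $\ell'$ are mutually unbiased, (\ref{MUF}) yields $|\langle\phi^{(\ell')}_k|\phi^{(\ell)}_j\rangle|^2=c$ for every $j=0,\dots,n-1$, so the sum collapses to $\sum_{j=0}^{n-1}c=cn$, a value independent of the chosen element $k$ and of the auxiliary frame $\ell'$. Substituting back produces $A_\ell\leq cn\leq B_\ell$.

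Because this double inequality holds for every $\ell=0,\dots,m-1$, I would finish by taking the maximum of the left-hand bounds and the minimum of the right-hand bounds over $\ell$, obtaining $\max_\ell A_\ell\leq cn\leq\min_\ell B_\ell$, as claimed. The argument is genuinely short, and I do not anticipate a substantive obstacle: the only points requiring care are verifying that frame elements are by hypothesis normalized, and hence valid test vectors, and checking that the collection contains $m\geq 2$ frames so that an auxiliary frame $\ell'\neq\ell$ is always available to supply the test vector.
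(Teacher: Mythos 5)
Your proposal is correct and follows essentially the same route as the paper: elements of one frame, being normalized, serve as test vectors $|x\rangle$ in the frame inequality of another frame, and unbiasedness collapses the resulting sum to $cn$, giving $A_\ell\leq cn\leq B_\ell$ for every $\ell$. Your write-up is in fact a bit cleaner in its indexing than the paper's, and the remark that $m\geq 2$ is needed to supply the auxiliary frame is a sensible (if implicit in the definition of the overlap) point of care.
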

\begin{proof}
From considering $|x\rangle=|\phi_i^{\ell}\rangle$, inequalities (\ref{frame}) imply
\begin{equation}
A_{1}\leq \sum_{j=0}^{n-1} |\langle \phi_{i}^{(\ell)}|\phi_{j}^{(\ell')}\rangle|^2
\leq B_{1},\, 
A_{2}\leq \sum_{i=0}^{n-1} |\langle \phi_{i}^{(\ell)}|\phi_{j}^{(\ell')}\rangle|^2
\leq B_{2},\dots,\, A_{m} \leq \sum_{j=0}^{d-1} |\langle \phi_{i}^{(\ell)}|\phi_{j}^{(\ell')}\rangle|^2
\leq  B_{m},
\end{equation}
for each $\ell=0,\dots,n-1$, which reduce to 
$A_{1} \leq cn \leq B_{1}, A_{2} \leq cn \leq B_{2} ,\dotsc, A_{m} \leq cn \leq B_{m}$. Therefore, the most restrictive constraint for $c$ is given by $\max_{\ell} A_{\ell}\leq cn\leq \min_{\ell}B_{\ell}$.
\end{proof}

\begin{figure}[htp]
\centering
\includegraphics[width=0.7\textwidth]{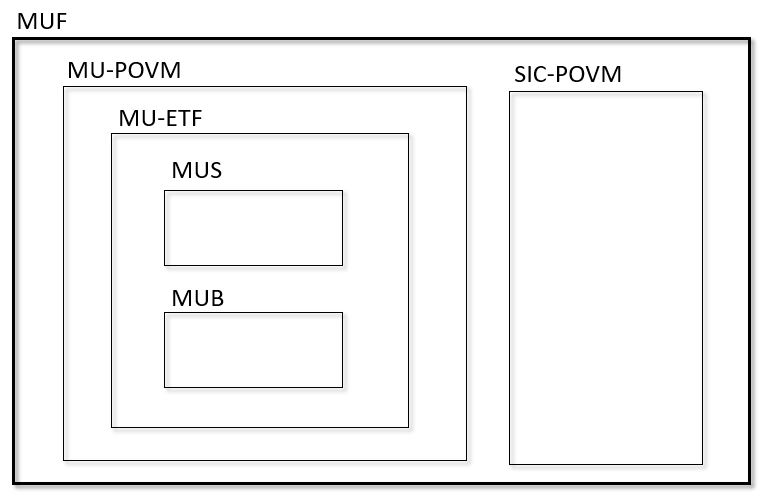}
\caption{Mutually unbiased frames (MUF), the main concept introduced in this work, represents the most general notion of unbiasedness for sets of linearly independent vectors. It generalizes the existing notions of unbiasedness for orthonormal bases (MUB) \cite{I81,WF89}, simplices (MUS) \cite{FS20,S19}, equiangular tight frames (MU-ETF) \cite{FM20} and POVM (MU-POVM) \cite{BBBCHT13}. Additionally, SIC-POVM \cite{Z99,RBSC04} can be seen as sets of $d$ MUF in dimension $d$, as shown in Section \ref{sec:sic}.}
\label{Fig2}
\end{figure}
For the case of orthonormal bases, Proposition \ref{prop:muf} reduces to mutually unbiased bases, where $c=1/d$. Indeed, the same overlap $c$ occurs when at least one of the frames is tight, thus forming a POVM.
\begin{corol}\label{corol:muf}
Let $\{|\phi^{(\ell)}_j\rangle\}$ be a set of $m$ mutually unbiased frames in $\mathcal{H}_d$ with overlap $c$, composed by $n$ elements each, such that at least one of them is tight. Therefore, 
$c=1/d$.
\end{corol}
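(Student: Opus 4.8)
The plan is to invoke Proposition \ref{prop:muf} directly, exploiting the defining property of a tight frame. Recall from the preliminaries that a tight frame consisting of $n$ elements in $\mathcal{H}_d$ has coinciding lower and upper bounds, namely $A=B=n/d$. Hence, if the tight frame in the collection carries the label $\ell_0$, then $A_{\ell_0}=B_{\ell_0}=n/d$.

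First I would apply Proposition \ref{prop:muf}, which furnishes the two-sided bound $\max_{\ell}A_{\ell}\leq cn\leq\min_{\ell}B_{\ell}$. Since the maximum over $\ell$ of the lower bounds is at least the particular value $A_{\ell_0}=n/d$, while the minimum over $\ell$ of the upper bounds is at most $B_{\ell_0}=n/d$, the following chain of inequalities results:
\[
\frac{n}{d}\leq\max_{\ell}A_{\ell}\leq cn\leq\min_{\ell}B_{\ell}\leq\frac{n}{d}.
\]
This squeezes $cn$ to equal $n/d$, and therefore $c=1/d$, which is the claimed value.

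Alternatively, and perhaps more transparently, I could argue straight from the frame operator. Writing $S_{\ell_0}=\sum_{j}|\phi^{(\ell_0)}_j\rangle\langle\phi^{(\ell_0)}_j|=(n/d)\,\mathbb{I}$ for the tight frame, I would evaluate the quadratic form $\langle x|S_{\ell_0}|x\rangle$ at the choice $|x\rangle=|\phi^{(\ell')}_i\rangle$, taken to be any element of any other frame $\ell'\neq\ell_0$. On one side this returns $\sum_{j}|\langle\phi^{(\ell')}_i|\phi^{(\ell_0)}_j\rangle|^2=n/d$, while on the other side the MUF condition \eqref{MUF} renders every summand equal to $c$, so that $nc=n/d$. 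Both routes deliver $c=1/d$ at once.

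There is no genuine obstacle in this argument; it is essentially a one-line consequence of Proposition \ref{prop:muf}. The only point requiring a modicum of care is to ensure that the tight frame genuinely participates in each overlap evaluation, which is automatic here because a single overlap constant $c$ is, by Definition \ref{defi:MUF}, shared by every pair of frames in the collection. Consequently, fixing the tight frame as one member of the pair suffices to pin down $c$ completely.
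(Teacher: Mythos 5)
Your first argument is exactly the paper's proof: apply Proposition \ref{prop:muf} and note that tightness forces $A_{\ell_0}=B_{\ell_0}=n/d$, squeezing $cn$ to $n/d$. The alternative frame-operator computation you add is a valid, slightly more self-contained restatement of the same fact, but the route and the key ingredient coincide with the paper's.
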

\begin{proof}
The proof arises from Proposition \ref{prop:muf}, where $\frac{n}{d}=A_{\ell'}=B_{\ell'}=cn$ holds. Thus, $c=1/d$.
\end{proof}

From Corollary \ref{corol:muf}, note that Parseval MUF are mutually unbiased bases in every finite dimension $d$. In Section \ref{sec:matrixMUF}, examples of MUF in $\mathcal{H}_d$ are shown, where the overlaps $c$ are not equal to $1/d$, in general.

\section{Matrix approach to mutually unbiased frames}\label{sec:matrixMUF}
In this section, a matrix approach to the problem of mutually unbiased frames is shown. In particular, the case of $m$ MUF composed by $d$ vectors each in $\mathcal{H}_d$ is considered. Through this approach, entries of the $j$-th element of each frame are arranged as columns belonging to a square complex matrix of order $d$, for every $j=0,\dots,d-1$. Thus, there are $m$ square matrices of order $d$, denoted $M_0,\dots,M_{m-1}$, satisfying the following relations:
\begin{equation}\label{matrices_muf}
M_j^{\dag}M_k=K^{(jk)},
\end{equation}
where $|(K^{(jk)})_{st}|^2=c$, for every $s,t=0,\dots,d-1$, and $j\not=k$. Here, $(K^{(jk)})_{st}$ denotes the entry of matrix $K^{(jk)}$ located in row $s$ and column $t$. Also, if $j=k$ the relation $(K^{(jj)})_{ss}=1$ occurs, since the column vectors of $M_j$ are normalized, for every $j=0,\dots,m-1$, and $s=0,\dots,d-1$. As a general property, it is simple to show that $M_j$ and $K^{(jk)}$ are full rank matrices, as their columns have to be linearly independent in order to form a frame. However, the problem to find sets of $m$ matrices $\{M_j\}$ satisfying (\ref{matrices_muf}) is evidently hard, if no further assumptions are taken. In order to simplify this problem, the following symmetries are assumed: matrices $M_j$ are normal, i.e. $[M^{\dag}_j,M_j]=0$, and they do commute, i.e. $[M_j,M_k]=0$ and $[M_j,M_k^{\dag}]=0$. These assumptions imply that $[M_{\ell},K^{(jk)}]=0$ is satisfied, for every $j,k,\ell=0,\dots,d-1$. Thus, the problem simplifies to finding special relations existing between the spectrum of matrices $M_j$ and $K^{(jk)}$.

At this stage, the choice of the unitary transformation $U$ that diagonalizes all matrices $M_j$ plays a fundamental role. For instance, if matrices $M_j$ are diagonal in the computational basis, then there is no pair of MUF. Thus, an interesting question arises: \emph{which choices of $U$ provide a solution to (\ref{matrices_muf})} for a given value of $m\geq2$? This question is hard to solve, and it is not studied through the current research. In turns, the simplest case of matrices $M_j$ diagonalized by the discrete Fourier transform is considered. That is, the matrices $M_0,\dots,M_{m-1}$ are circulant. This choice is justified by simplicity, as several useful properties are well-known for this class of matrices \cite{D98}. A square matrix $M$ is called circulant if the $j+1$-th row is a cyclic permutation of the $j$-th row, for every $j=0,\dots-d-2$. That is,
\begin{equation}
M=\left(\begin{array}{ccccc}
m_0&m_1&m_2&\dots&m_{d-1}\\
m_{d-1}&m_0&m_1&\dots&m_{d-2}\\
\vdots&&&&\vdots\\
m_1&m_2&m_3&\dots&m_0
\end{array}\right),
\end{equation}
where $m_0,\dots,m_{d-1}\in\mathbb{C}$.
Let $\vec{\mu}(M)$ be the vector formed by all elements of the first row of $M$, i.e.   $\vec{\mu}=(m_0,\dots,m_{d-1})^T$, where $T$ denotes transposition. Also, let $\vec{\lambda}(M)$ be a column vector containing all  eigenvalues of $M$, sorted according to the decomposition
\begin{equation}\label{M}
M=F\mathrm{diag}[\vec{\lambda}(M)]F^{\dag},
\end{equation}
where $\mathrm{diag}[\vec{\lambda}(M)]$ is a diagonal matrix containing the vector $\vec{\lambda}(M)$ along its main diagonal, and $F$ is the discrete Fourier transform, having entries $F_{jk}=\frac{1}{\sqrt{d}}\omega^{jk}$, where  $\omega=e^{2\pi i/d}$. Along the work, $\vec{\lambda}(M)$ is called \textit{the vector defined by the eigenvalues of $M$.} From (\ref{M}), it is simple to show that $\vec{\lambda}(M)=\sqrt{d}F^{\dag}\vec{\mu}(M)$ is satisfied. On the other hand, the unitary operator $F^2$ transforms the first row of $M$, i.e. $\vec{\mu}(M)$, to the first column of $M$, that has unit norm. Therefore, normalized columns in $M$ translates to the condition $\|\vec{\mu}(M)\|=1$, or equivalently  $\|\vec{\lambda}(M)\|=\sqrt{d}$. 

Similarly, any matrix $K$ from (\ref{matrices_muf}) is circulant, having the form
\begin{equation}
K=\sqrt{c}\left(\begin{array}{ccccc}
e^{i\alpha_0}&e^{i\alpha_1}&e^{i\alpha_2}&\dots&e^{i\alpha_{d-1}}\\
e^{i\alpha_{d-1}}&e^{i\alpha_0}&e^{i\alpha_1}&\dots&e^{i\alpha_{d-2}}\\
\vdots&&&&\vdots\\
e^{i\alpha_1}&e^{i\alpha_2}&e^{i\alpha_3}&\dots&e^{i\alpha_{0}}
\end{array}\right),
\end{equation}
where $\alpha_0,\dots,\alpha_{d-1}\in[0,2\pi)$ and $0<c<1$. The first row of $K$ is arranged as the vector $\vec{\mu}(K)=\sqrt{c}(e^{i\alpha_0},\dots,e^{i\alpha_{d-1}})^T$, and the vector $\vec{\lambda}(K)$ containing all eigenvalues of $K$ is defined according to
\begin{equation}\label{K}
K=F\mathrm{diag}[\vec{\lambda}(K)]F^{\dag}.
\end{equation}
As a basic property of circulant matrices, the relation $\vec{\lambda}(K)=\sqrt{d}F^{\dag}\vec{\mu}(K)$ occurs, where $\|\vec{\mu}(K)\|=\sqrt{cd}$ and $\|\vec{\lambda}(K)\|=d\sqrt{c}$. Now, the first result of this section is established.

\begin{prop}\label{equi_M}
Let $M_0,\dots,M_{m-1}$ be $m$ circulant matrices that form $m$ MUF composed by $d$ vectors each, in dimension $d$. Then, the normalized vectors defined by the eigenvalues of these matrices form $m$ equiangular lines in $\mathbb{C}^d$, with overlap $c$.
\end{prop}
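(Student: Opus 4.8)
The plan is to show that the normalized eigenvalue vectors $\hat\lambda_\ell := \vec\lambda(M_\ell)/\sqrt d$ satisfy $|\langle \hat\lambda_j,\hat\lambda_k\rangle|^2 = c$ for every $j\neq k$, which is precisely the assertion that they span $m$ equiangular lines in $\mathbb{C}^d$ with overlap $c$. Since the columns of each $M_\ell$ are normalized, recall from the discussion preceding (\ref{M}) that $\|\vec\lambda(M_\ell)\|=\sqrt d$, so each $\hat\lambda_\ell$ is already a unit vector and the normalization contributes only an overall factor $1/d^2$. Hence it suffices to control the numerator $|\langle \vec\lambda(M_j),\vec\lambda(M_k)\rangle|^2$.

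The step I would exploit is that all $M_\ell$ are circulant, hence simultaneously diagonalized by the Fourier matrix $F$ as in (\ref{M}), with $M_\ell^{\dagger}=F\,\mathrm{diag}[\overline{\vec\lambda(M_\ell)}]\,F^{\dagger}$. Writing the Hermitian inner product of eigenvalue vectors as a trace of a product of diagonal matrices and invoking the unitary (cyclic) invariance of the trace, I would obtain the identity $\langle \vec\lambda(M_j),\vec\lambda(M_k)\rangle=\mathrm{Tr}\!\big(\mathrm{diag}[\overline{\vec\lambda(M_j)}]\,\mathrm{diag}[\vec\lambda(M_k)]\big)=\mathrm{Tr}(M_j^{\dagger}M_k)=\mathrm{Tr}(K^{(jk)})$. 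This is the conceptual heart of the argument: it converts a statement about the whole spectrum of the $M_\ell$ into a statement about a single, explicitly constrained quantity attached to $K^{(jk)}$.

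From here the computation is immediate. Because $K^{(jk)}$ is circulant, all of its diagonal entries coincide with its top-left entry, so $\mathrm{Tr}(K^{(jk)})=d\,(K^{(jk)})_{00}$. The defining MUF relation (\ref{matrices_muf}) forces $|(K^{(jk)})_{00}|^2=c$, whence $|\langle \vec\lambda(M_j),\vec\lambda(M_k)\rangle|^2=d^2 c$. Dividing by $\|\vec\lambda(M_j)\|^2\,\|\vec\lambda(M_k)\|^2=d^2$ yields $|\langle \hat\lambda_j,\hat\lambda_k\rangle|^2=c$, independently of the pair $(j,k)$, which is exactly the claimed equiangularity with overlap $c$.

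I do not anticipate a genuine obstacle once the trace identity is in place; the crux is spotting it, and the only place demanding care is the bookkeeping of conjugation and of the entrywise (Hadamard) action of $F$-diagonalization on products of circulant matrices, namely checking that the spectrum of $M_j^{\dagger}M_k$ is the entrywise product $\overline{\vec\lambda(M_j)}\odot\vec\lambda(M_k)$, so that $\vec\lambda(K^{(jk)})$ is well defined and consistent with (\ref{K}). A secondary point worth verifying is that the conclusion uses only the off-diagonal blocks $j\neq k$, for which \emph{every} entry of $K^{(jk)}$ has modulus $\sqrt c$; the diagonal normalization $(K^{(jj)})_{ss}=1$ plays no role here.
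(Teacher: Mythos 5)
Your proof is correct and follows essentially the same route as the paper's: both hinge on the identity $\langle\vec{\lambda}(M_j),\vec{\lambda}(M_k)\rangle=\mathrm{Tr}(M_j^{\dagger}M_k)=\mathrm{Tr}(K^{(jk)})$ obtained from simultaneous $F$-diagonalization, followed by the observation that a circulant $K^{(jk)}$ has constant diagonal of modulus $\sqrt{c}$. Incidentally, your value $|\mathrm{Tr}(K^{(jk)})|^2=d^2c$ is the correct one (the paper's displayed final line reads $cd$, an apparent slip), and it is your bookkeeping that cleanly yields overlap $c$ after dividing by $\|\vec{\lambda}(M_j)\|^2\|\vec{\lambda}(M_k)\|^2=d^2$.
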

\begin{proof}
The result arises from direct calculation:
\begin{eqnarray}
|\vec{\lambda}(M_j)\cdot\vec{\lambda}(M_k)|^2&=&|\mathrm{Tr}\bigl(\mathrm{diag}[\vec{\lambda}(M_j)]^{\dag}\mathrm{diag}[\vec{\lambda}(M_k)]\bigr)|^2\nonumber\\
&=&|\mathrm{Tr}(F^{\dag}\vec{\lambda}(M_j)]^{\dag}FF^{\dag}\vec{\lambda}(M_k)]F)|^2\nonumber\\
&=&|\mathrm{Tr}(M_{j}^{\dag}M_{k})|^2\nonumber\\
&=&|\mathrm{Tr}(K)|^2\nonumber\\
&=&cd,
\end{eqnarray}
for every $j\not=k\in\{0,\dots,m-1\}$. Moreover, when eigenvectors $\vec{\lambda}(M_0),\dots,\vec{\lambda}(M_{m-1})$ are normalized, they define $m$ equiangular lines with overlap $c$, the same overlap of the MUF.
\end{proof}
An expert reader might recognize here the evidence of a well-known property holding for fiducial states of SIC-POVM: the Fourier transform of a fiducial state is a fiducial state. From combining Prop. \ref{equi_M} with results from equiangular lines theory \cite{GKMS}, the following constraints for the existence of MUF is obtained:
\begin{prop}
Suppose that $M_1,\dots,M_m$ are circulant matrices of order $d$, forming $m$ MUF. Then, the inequality  $m\leq d^2$ holds.
\end{prop}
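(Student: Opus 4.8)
The plan is to reduce the statement to the classical \emph{absolute bound} for complex equiangular lines. By Proposition \ref{equi_M}, the normalized vectors $\hat\lambda_\ell := \vec\lambda(M_\ell)/\|\vec\lambda(M_\ell)\|$ defined by the eigenvalues of $M_0,\dots,M_{m-1}$ form $m$ equiangular lines in $\mathbb{C}^d$ with overlap $c$, where $0<c<1$. Hence it suffices to show that any collection of $m$ equiangular lines in $\mathbb{C}^d$ satisfies $m\le d^2$, which is exactly the bound invoked from equiangular lines theory \cite{GKMS}. So the whole argument hinges on establishing (or citing) this absolute bound.

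First I would pass from lines to rank-one projectors: let $P_\ell=|\hat\lambda_\ell\rangle\langle\hat\lambda_\ell|$ for $\ell=0,\dots,m-1$. Each $P_\ell$ is a Hermitian operator on $\mathbb{C}^d$, so it lives in the real vector space $\mathcal{V}$ of $d\times d$ Hermitian matrices, which has real dimension $d^2$. Equipping $\mathcal{V}$ with the (real-valued) Hilbert--Schmidt inner product $\langle A,B\rangle=\mathrm{Tr}(AB)$, equiangularity translates into
\begin{equation}
\mathrm{Tr}(P_jP_k)=|\langle\hat\lambda_j|\hat\lambda_k\rangle|^2=\begin{cases}1,& j=k,\\ c,& j\ne k.\end{cases}
\end{equation}
Thus the Gram matrix of $\{P_\ell\}$ equals $G=(1-c)\,I+c\,J$, where $I$ is the identity and $J$ the all-ones matrix of order $m$. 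Its spectrum consists of $1+(m-1)c$ (once) and $1-c$ (with multiplicity $m-1$); since $0<c<1$ both are strictly positive, so $G$ is positive definite. A positive definite Gram matrix forces $P_0,\dots,P_{m-1}$ to be linearly independent in $\mathcal{V}$, and a linearly independent family in a $d^2$-dimensional real space has at most $d^2$ elements. Therefore $m\le d^2$.

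The conceptual work is already done by Proposition \ref{equi_M}; the remaining steps are routine once one recognizes that the natural ambient space for the projectors is the $d^2$-dimensional real space of Hermitian matrices rather than $\mathbb{C}^d$ itself. The only point deserving care is verifying that the equiangular overlap satisfies $0<c<1$ strictly, which guarantees both that the lines are genuinely distinct and that the Gram matrix is nondegenerate; this is ensured by the defining constraint $0<c<1$ imposed on the circulant matrices $K$ in (\ref{matrices_muf}). I expect no substantial obstacle beyond quoting the complex absolute bound correctly, the subtlety being that this bound is $d^2$ and \emph{not} the smaller real Gerzon bound $d(d+1)/2$, precisely because $\mathcal{V}$ is the full space of Hermitian---rather than real symmetric---matrices.
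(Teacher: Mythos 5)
Your proof is correct and follows the same route as the paper: reduce to equiangular lines via Proposition \ref{equi_M} and then invoke the absolute bound $m\leq d^2$ for complex equiangular lines. The only difference is that the paper simply cites this bound from the equiangular-lines literature, whereas you supply its standard Gram-matrix proof in the $d^2$-dimensional real space of Hermitian matrices; that argument is sound, including your check that $0<c<1$ makes the Gram matrix positive definite.
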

\begin{proof}
The proof arises from the fact that there are no more than $d^2$ equiangular lines in dimension $d$.
\end{proof}
Moreover, stronger bounds can be found when the eigenvalues of $M_j$ are real.
\begin{prop}
Suppose a set MUF composed of $m$ hermitian circulant matrices, with overlap $c$, exists. Therefore, the following properties hold:
\begin{enumerate}
\item $m\leq d(d+1)/2$ (see \cite{LS73}).
\item  If $m\geq2d$, then $1/\sqrt{c}$ is an integer number (see \cite{LS73}).
\item If $c\leq1/(d+2)$, then $m\leq d(1-c)/(1-dc)$ (see Lemma 6.1 here \cite{LS66}). 
\item If $c\leq1/d^2$, then $m\leq d+1$.
\end{enumerate}
\end{prop}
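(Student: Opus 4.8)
The plan is to reduce all four statements to classical bounds on real equiangular lines, exploiting the hermiticity of the matrices $M_j$. The crucial observation is that a hermitian matrix has real eigenvalues, so each vector $\vec{\lambda}(M_j)=\sqrt{d}\,F^{\dagger}\vec{\mu}(M_j)$ lies in $\mathbb{R}^d$. By Proposition \ref{equi_M}, once normalized these vectors span $m$ equiangular lines with overlap $c$; since $0<c<1$ the lines are pairwise distinct, because parallel lines would force overlap $1$. Hence the whole problem is transferred to a configuration of exactly $m$ equiangular lines in the \emph{real} space $\mathbb{R}^d$ whose common squared cosine equals $c$. With this translation in hand, items (1) and (2) follow immediately: the absolute bound for real equiangular lines gives $m\le d(d+1)/2$ \cite{LS73}, while Neumann's theorem forces $1/\sqrt{c}$ to be an (odd) integer once $m\ge 2d$ \cite{LS73}.

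For item (3) I would invoke the relative bound for real equiangular lines. First I would form the rank-one projectors $P_j=\hat{\lambda}_j\hat{\lambda}_j^{T}$ onto the normalized eigenvalue vectors and compute $\mathrm{Tr}\bigl(\sum_j P_j\bigr)=m$ together with $\mathrm{Tr}\bigl((\sum_j P_j)^2\bigr)=m+m(m-1)c$. Since $\sum_j P_j$ is positive semidefinite and of rank at most $d$, the inequality $\mathrm{Tr}(A^2)\ge(\mathrm{Tr}\,A)^2/d$ yields $d(1-c)\ge m(1-dc)$. Provided $1-dc>0$ this rearranges into $m\le d(1-c)/(1-dc)$, which is exactly Lemma 6.1 of \cite{LS66}. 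The role of the hypothesis $c\le 1/(d+2)$ is transparent: substituting $c=1/(d+2)$ makes the right-hand side equal to the absolute bound $d(d+1)/2$, so $c\le 1/(d+2)$ is precisely the regime in which the relative bound is at least as strong as item (1), and in particular it guarantees $c<1/d$, legitimizing the rearrangement.

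Finally, item (4) should drop out of item (3) by monotonicity. I would first note that $c\le 1/d^2$ implies $c\le 1/(d+2)$, since $d^2\ge d+2$ for every $d\ge 2$, so the bound of item (3) applies. The function $f(c)=d(1-c)/(1-dc)$ is increasing on $[0,1/d)$, and a direct evaluation gives $f(1/d^2)=(d^2-1)/(d-1)=d+1$. Hence $m\le f(c)\le f(1/d^2)=d+1$, and integrality of $m$ closes the argument.

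The main obstacle I anticipate is not any single computation but securing the reduction itself: one must be certain that hermiticity is genuinely equivalent to reality of the eigenvalue vectors, and that Proposition \ref{equi_M} delivers honest, pairwise distinct lines, so that the external theorems of \cite{LS73,LS66} apply verbatim with $m$ lines and squared cosine exactly $c$. Once that bridge is solid, items (1)--(4) amount to quotations from the real equiangular-lines literature combined with the elementary monotonicity check in the last step.
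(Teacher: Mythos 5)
Your proposal is correct and follows essentially the same route as the paper: hermiticity makes the eigenvalue vectors $\vec{\lambda}(M_j)$ real, Proposition \ref{equi_M} turns them into $m$ real equiangular lines with squared cosine $c$, and items (1)--(3) are then the absolute bound, Neumann's theorem and the relative bound from \cite{LS73,LS66}, with (4) obtained from (3) by evaluating $d(1-c)/(1-dc)$ at $c=1/d^2$. You simply supply more detail than the paper (the trace derivation of the relative bound and the monotonicity check), all of which is sound.
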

\begin{proof}
Proofs are based on the fact that eigenvectors of all matrices forming the MUF are real (due to hermiticity) and equiangular (due to Prop \ref{equi_M}). Thus, the constraints established in \emph{1.}, \emph{2.} and \emph{3.} occur for real ETF, which can be found in the described references. Item \emph{4.} arises from assuming that $c\leq1/d^2$, and then using \emph{3.} 
\end{proof}
Further properties of eigenvalues of matrices $M_j$ can be derived. By using (\ref{matrices_muf}), it occurs that
\begin{equation}
\vec{\lambda}^*(M_j)\circ\vec{\lambda}(M_k)=\vec{\lambda}(K),
\end{equation}
where $\vec{\lambda}(M_j)$ is the vector associated to $M_j$, the circle ($\circ$) denotes the entrywise (Hadamard) product and the asterisk (*) complex conjugation in the computational basis. Note that the inequality norm  $\|\vec{v}_1\circ\vec{v}_2\|\leq\|\vec{v}_1\|\|\vec{v}_2\|$, produces $c\leq1$. This upper bound is tight, as two matrices $M_1$ and $M_2$ with normalized and linearly independent columns might be arbitrarily close to a rank one projector $P$, thus producing $c\rightarrow1$. Also, it might occur that  $M_1\rightarrow P_1$ and $M_2\rightarrow P_2$, with orthogonal projectors $P_1$ and $P_2$ satisfying $P_1P_2=0$, thus having $c\rightarrow0$.

To illustrate the above results, the most general family of 2 MUF in dimension 2, arising from circulant matrices, is derived as follows. Suppose that
\begin{eqnarray}\label{M1M2_MUF}
M_{1}=
\begin{pmatrix}
\cos(\theta) & \sin(\theta)e^{i\alpha} \\
\sin(\theta)e^{i\alpha} & \cos(\theta)
\end{pmatrix}\quad\mbox{and}\quad M_{2}=
\begin{pmatrix}
\cos(\eta) & \sin(\eta)e^{i\beta} \\
\sin(\eta)e^{i\beta}& \cos(\eta)
\end{pmatrix},
\end{eqnarray}
satisfy the relations $M^{\dag}_1M_2=K$, with $|K_{ij}|^2=c$, $i,j=0,\dots,d-1$. From a straight calculation, a sufficient condition to have a pair of MUF from (\ref{M1M2_MUF}) is obtained:
\begin{eqnarray}\label{sol_muf_d2}
\sin(\alpha)\sin(\beta)\tan(2\theta)\tan(2\eta)=-1.
\end{eqnarray}
Note that (\ref{sol_muf_d2}) has a solution if and only if the following properties hold:
\begin{equation*}
\alpha,\beta,\theta,\eta\not\in\{0,\pi/2,\pi,3\pi/2\},
\end{equation*}
\begin{equation*} -1\leq\sin^{-1}(\alpha)\tan^{-1}(2\theta)\tan^{-1}(2\eta)\leq1,
\end{equation*}
and
\begin{equation*}
-1\leq\sin^{-1}(\beta)\tan^{-1}(2\theta)\tan^{-1}(2\eta)\leq1,
\end{equation*}
where the negative power means inverse multiplicative. This generic solution determines a 3-parametric nonlinear family of MUF with overlap
\begin{equation}
c=\cos   ^2(\eta-\theta)+2\sin(\eta)\cos(\eta)\sin (\theta)\cos(\theta)(\cos(\alpha-\beta)-1),
\end{equation}
which includes SIC-POVM when, for instance, $\alpha=\beta$ and $\theta-\eta=\arccos(1/\sqrt{3})$. Note that there is no real pair of MUF within this continuous family. Nonetheless, real pairs of MUF arise from (\ref{M1M2_MUF}) when  $\theta\in\{0,\pi\}$ and $\eta\in\{\pi/2,3\pi/2\}$, also when $\theta\in\{\pi/2,3\pi/2\}$ and $\eta\in\{0,\pi\}$, and finally when $\alpha,\beta\in\{\pi/2,3\pi/2\}$. All these isolated solutions form pairs of real MUB.

Interestingly, the above study reveals that MUF based on circulant matrices admit solution to both the SIC-POVM and the MUB problem for a qubit system. The same occurs for every prime dimension $d$: the eigenvectors bases of operators $Z,XZ,\dots,XZ^{d-1}$, forming a maximal set of $d+1$ MUB, can be arranged as $d$ circulant unitary matrices of order $d$ \cite{BBRV01}. Moreover, Zauner's conjecture \cite{Z99} allows to see a SIC-POVM as a set of MUF composed by $d$ circulant matrices, as it will be shown in Section \ref{sec:sic}. Despite the existence of $m$ MUF in dimension $d$ is a hard problem, a solution for the case of $m=2$ is shown below, in any dimension $d$.
\begin{prop}\label{2MUF}
Let $M$ be a circulant matrix of order $d$ with columns forming a normalized frame. Then, there is another matrix $\tilde{M}$ inducing a frame such that $M$ and $\tilde{M}$ form a pair of MUF, in every dimension $d$. 
\end{prop}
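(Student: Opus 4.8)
The plan is to keep the second frame circulant as well, so that the entire problem is controlled by the eigenvalue vectors and the normalization collapses to a single scalar equation. Writing $\tilde M=F\,\mathrm{diag}[\vec\lambda(\tilde M)]F^{\dag}$, the product with the given matrix is again circulant and, by the Hadamard relation already derived, $M^{\dag}\tilde M=F\,\mathrm{diag}[\vec\lambda^*(M)\circ\vec\lambda(\tilde M)]F^{\dag}=:K$, with $\vec\lambda(K)=\vec\lambda^*(M)\circ\vec\lambda(\tilde M)$. Since $M$ has normalized and linearly independent columns, all entries of $\vec\lambda(M)$ are nonzero and $\|\vec\lambda(M)\|=\sqrt d$. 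The task thus reduces to choosing $\vec\lambda(\tilde M)$ so that (i) $K$ has constant-modulus entries, (ii) $\tilde M$ has nonzero eigenvalues (full rank, hence a frame), and (iii) $\|\vec\lambda(\tilde M)\|=\sqrt d$ (normalized columns).

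The first row of $K$ is $\vec\mu(K)=\tfrac{1}{\sqrt d}F\vec\lambda(K)$, so condition (i), namely $|\vec\mu(K)_j|^2=c$ for all $j$, is exactly the statement that the discrete Fourier transform of $\vec\lambda(K)$ is \emph{flat}. I would therefore set $\vec\lambda(K)=t\,\vec w$, where $\vec w$ is a biunimodular vector, i.e. $|w_s|=1$ for all $s$ and $|(F\vec w)_j|=1$ for all $j$, and $t>0$ is a scale to be fixed below. Such vectors exist in every dimension (chirp, or Gauss, sequences furnish explicit examples), which is the one external ingredient the argument rests on. With this choice, define the eigenvalues of $\tilde M$ componentwise by $\lambda_s(\tilde M)=t\,w_s/\lambda_s^*(M)$, so that indeed $\vec\lambda^*(M)\circ\vec\lambda(\tilde M)=t\vec w=\vec\lambda(K)$. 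Then every entry of $\vec\mu(K)$ has modulus $t/\sqrt d$, giving $K$ constant-modulus entries with $c=t^2/d$; and since $w_s\neq0$ and $\lambda_s(M)\neq0$, every $\lambda_s(\tilde M)$ is nonzero, so $\tilde M$ is full rank and its columns form a frame, settling (i) and (ii).

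It remains to enforce (iii) by fixing the free scale. Using $|w_s|=1$,
\begin{equation*}
\|\vec\lambda(\tilde M)\|^2=t^2\sum_{s=0}^{d-1}\frac{1}{|\lambda_s(M)|^2},
\end{equation*}
so the single choice $t^2=d\big/\sum_s|\lambda_s(M)|^{-2}$ makes $\|\vec\lambda(\tilde M)\|=\sqrt d$ and hence normalizes the columns of $\tilde M$. The resulting overlap is $c=t^2/d=1\big/\sum_s|\lambda_s(M)|^{-2}$; the Cauchy--Schwarz inequality applied to $\sum_s|\lambda_s(M)|^2=d$ yields $c\le 1/d$, with equality precisely when all $|\lambda_s(M)|$ coincide, that is, when $M$ is tight, in agreement with Corollary \ref{corol:muf}. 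This completes the construction: $M$ and $\tilde M$ form a pair of mutually unbiased frames. The only genuine obstacle is step (i), the flatness requirement; everything else is bookkeeping on eigenvalues. Thus the heart of the argument is the existence of a unimodular vector with flat Fourier transform in arbitrary dimension, which I would invoke as a classical fact rather than reprove.
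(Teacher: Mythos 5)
Your construction is correct and follows essentially the same route as the paper: reduce the unbiasedness condition to the Hadamard relation $\vec\lambda^*(M)\circ\vec\lambda(\tilde M)=\vec\lambda(K)$, observe that constant-modulus entries of $K$ force $\vec\lambda(K)$ to be a scaled biunimodular vector (the paper phrases this as a vector unbiased to the pair $\{\mathbb{I},F\}$ and instantiates it with eigenvectors of $XZ$, while you invoke chirp/Gauss sequences --- the same classical fact), then divide entrywise and fix the scale by normalization, obtaining the same overlap $c=1/\sum_s|\lambda_s(M)|^{-2}$. Your closing Cauchy--Schwarz remark that $c\leq 1/d$ with equality exactly for tight $M$ is a small bonus consistent with Corollary \ref{corol:muf}, but the argument is otherwise the paper's proof.
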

\begin{proof}
Matrices $M$ and $\tilde{M}$ are MUF if and only if $M^{\dag}\tilde{M}=K$, with $|K_{jk}|^2=c$, for every $j,k=0,\dots,d-1$  and some $0<c<1$. This is equivalent to say that their normalized vectors defined by its eigenvalues satisfy
\begin{equation}\label{lambdas_app}
\vec{\lambda}^*_M\circ\vec{\lambda}_{\tilde{M}}=\vec{\lambda}_{K}.
\end{equation}
By assuming the polar decompositions
\begin{equation}
\lambda_{M}=\sum_{j=0}^{d-1}r_j\omega^{\alpha_j}|j\rangle\quad\mbox{and}\quad\lambda_{\tilde{M}}=\sum_{j=0}^{d-1}s_j\omega^{\beta_j}|j\rangle,
\end{equation}
where $r_j,s_j,\alpha_j,\beta_j\in\mathbb{R}$, $j=0,\dots,d-1$, and the fact that
\begin{equation}
\vec{\lambda}_{K}=\sqrt{c}\,F^{\dag}\sum_{j=0}^{d-1}\omega^{\gamma_j}|j\rangle,
\end{equation}
for some $\gamma_0,\dots,\gamma_{d-1}\in[0,2\pi)$, one can re-write (\ref{lambdas_app}) as
\begin{equation}\label{polar_rs_app}
\sum_{j=0}^{d-1}r_js_j\omega^{\beta_j-\alpha_j}|j\rangle=\sqrt{cd}\,F^{\dag}\sum_{j=0}^{d-1}\omega^{\gamma_j}|j\rangle.
\end{equation}
Note that $r_j\not=0$ and $s_j\not=0$, for any $j=0,\dots,d-1$. This is so because eigenvalues of matrices $M$ and $\tilde{M}$ cannot vanish, because their columns form a frame. Now, if there is an -unormalized- vector $\sum_{j=0}^{d-1}\omega^{\gamma_j}|j\rangle$ unbiased to the pair $\{\mathbb{I},F\}$, then there exists $\delta_0,\dots,\delta_{d-1}\in[0,2\pi)$ such that $F^{\dag}\sum_{j=0}^{d-1}\omega^{\gamma_j}|j\rangle=\sum_{j=0}^{d-1}\omega^{\delta_j}|j\rangle$. So, (\ref{polar_rs_app}) reduces to
\begin{equation}
\sum_{j=0}^{d-1}r_js_j\omega^{\beta_j-\alpha_j}|j\rangle=\sqrt{cd}\,\sum_{j=0}^{d-1}\omega^{\delta_j}|j\rangle,
\end{equation}
implying that $r_js_j=\sqrt{cd}$ and $\beta_j-\alpha_j=\delta_j$, for every $j=0,\dots,d-1$. Also, recall that normalization of vectors of the frames imply the additional constraints $\sum_{j=0}^{d-1}r^2_j=\sum_{j=0}^{d-1}s^2_j=d$. Thus, the suitable matrix $\tilde{M}$ is characterized by $s_j=\sqrt{cd}/r_j$ and $\beta_j=\alpha_j+\gamma_j$, with the constraint $\sum_{j=0}^{d-1}s^2_j=cd\sum_{j=0}^{d-1}\frac{1}{r^2_j}=d$, implying that $c=1/\sum_{j=0}^{d-1}\frac{1}{r^2_j}<1$. To conclude the proof, note that there are at least $d$ vectors MU to the pair $\{\mathbb{I},F\}$ in every dimension $d$, corresponding to the eigenvectors of $XZ$.
\end{proof}
Prop. \ref{2MUF} can be illustrated with a 3-parametric family of MUF for a qubit system. Consider the vector $v=(1,i)^T$, which is MU to the pair $\{\mathbb{I},F\}$ in dimension $d=2$. Thus, taking $r_0=\sqrt{2}\cos(\nu)$ and $r_1=\sqrt{2}\sin(\nu)$, it holds that
\begin{eqnarray*}
M&=&Fdiag[\vec{\lambda}(M)]F^{\dag}\\
&=&\frac{1}{2}\left(\begin{array}{cc}
1&1\\
1&-1
\end{array}\right)
\left(\begin{array}{cc}
\sqrt{2}\cos(\nu)\omega^{\alpha_0}&0\\
0&\sqrt{2}\sin(\nu)\omega^{\alpha_1}
\end{array}\right)
\left(\begin{array}{cc}
1&1\\
1&-1
\end{array}\right)\\
&=&\frac{1}{\sqrt{2}}\left(\begin{array}{cc}
\cos(\nu)\omega^{\alpha_0}+\sin(\nu)\omega^{\alpha_1}&\cos(\nu)\omega^{\alpha_0}-\sin(\nu)\omega^{\alpha_1}\\
\cos(\nu)\omega^{\alpha_0}-\sin(\nu)\omega^{\alpha_1}&\cos(\nu)\omega^{\alpha_0}+\sin(\nu)\omega^{\alpha_1}
\end{array}\right),
\end{eqnarray*}
and
\begin{eqnarray*}
\tilde{M}&=&Fdiag[\vec{\lambda}(\tilde{M})]F^{\dag}\\
&=&\frac{1}{2}\left(\begin{array}{cc}
1&1\\
1&-1
\end{array}\right)
\left(\begin{array}{cc}
\frac{\sqrt{c}}{\cos(\nu)}\omega^{\alpha_0+1/4}&0\\
0&\frac{\sqrt{c}}{\sin(\nu)}\omega^{\alpha_1-1/4}
\end{array}\right)
\left(\begin{array}{cc}
1&1\\
1&-1
\end{array}\right)\\
&=&\frac{1}{2} \sqrt{c}\left(
\begin{array}{cc}
 \sec(\nu ) w^{\alpha_0+\frac{1}{4}}+\csc(\nu )
   w^{\alpha_1-\frac{1}{4}} &  w^{\alpha_0+\frac{1}{4}}\sec(\nu)-w^{\alpha_1-\frac{1}{4}} \csc (\nu ) \\
  w^{\alpha_0+\frac{1}{4}} \sec (\nu )-w^{\alpha_1-\frac{1}{4}} \csc (\nu )& \sec (\nu )
   w^{\alpha_0+\frac{1}{4}}+\csc (\nu ) w^{\alpha_1-\frac{1}{4}}
\end{array}
\right),
\end{eqnarray*}
with overlap $c=2\cos^2(\nu)\sin^2(\nu)$. In particular, for $\nu=\pi/4$ and $\nu=-\arctan \left(\sqrt{\frac{3-\sqrt{3}}{3+\sqrt{3}}}\right)$, two MUB and a SIC-POVM occur, respectively.

\section{Application to the SIC-POVM problem}\label{sec:sic}
Along this section, SIC-POVM are organized as $d$ sets of mutually unbiased frames, composed by $d$ vectors each. This slightly different way of seeing these constellations carries out remarkably interesting consequences, as shown below. Based on the matrix approach to MUF, introduced in Section \ref{sec:matrixMUF}, one can assume that the $d^2$ vectors of a SIC-POVM can be arranged as column vectors in $m=d$ circulant matrices of order $d$, $M_0,\dots,M_{d-1}$, each of them defining a frame. This assumption reduces the problem to finding a single suitable matrix of order $d$, called $\mathcal{M}$, that contains the normalized vectors $\vec{\lambda}(M_0)/\sqrt{d},\dots,\vec{\lambda}(M_{d-1})/\sqrt{d}$ along its columns. The problem of finding a solution $\mathcal{M}$ is hard, since it has a quadratic number of free parameters as a function of the dimension $d$. In order to make the problem simpler, a further reasonable assumption is adopted: the matrix $\mathcal{M}$ is circulant. Thus, all the problem reduces to find a suitable first row of $\mathcal{M}$, that has $2(d-1)$ free parameters without loss of generality. After these intuitive, but drastic assumptions, one might wonder whether there is a SIC-POVM satisfying the above symmetries. Below, a positive answer is provided in every dimension $d$ where a SIC-POVM is known.
\begin{prop}\label{MZauner}
The assumption that matrices $M_0,\dots,M_{d-1}$ and $\mathcal{M}$ are circulant is equivalent to the assumption that a SIC-POVM can be defined as an orbit of a fiducial state under the Weyl-Heisenberg group.
\end{prop}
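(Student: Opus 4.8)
The plan is to prove the equivalence constructively in both directions, using the dictionary between circulance and the action of the shift and phase operators $X,Z$ of Eq.~(\ref{XZ}). I will rely on two elementary properties. First, a matrix of order $d$ is circulant if and only if each column is the cyclic shift of the preceding one; equivalently, its columns form a single orbit $\{w,Xw,\dots,X^{d-1}w\}$ of the shift $X$, so that fixing the first column rigidly determines all the others (no free column phases survive). Second, the vector of eigenvalues of a circulant matrix $M$ is, up to the factor $\sqrt d$, a discrete Fourier transform of its first column $v$, namely $\vec\lambda(M)/\sqrt d=F^{\dagger}v$, as follows directly from diagonalizing a circulant matrix by $F$. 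Finally, I will use that $F$ intertwines the two Weyl--Heisenberg generators, $FXF^{\dagger}=Z$, equivalently $F^{\dagger}ZF=X$.

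For the implication from Weyl--Heisenberg covariance to circulance, suppose the $d^{2}$ SIC vectors form the orbit $\{X^{p}Z^{q}|\psi\rangle\}_{p,q=0}^{d-1}$ of a fiducial $|\psi\rangle$ (the phase conventions of the displacement operators are irrelevant here, since only the rank-one projectors enter). For each fixed $q$, the vectors $X^{p}Z^{q}|\psi\rangle$, $p=0,\dots,d-1$, form an $X$-orbit of $Z^{q}|\psi\rangle$, hence are the columns of a circulant matrix $M_{q}$ with first column $v_{q}=Z^{q}|\psi\rangle$. The $q$-th column of $\mathcal M$ is then $\vec\lambda(M_{q})/\sqrt d=F^{\dagger}v_{q}=F^{\dagger}Z^{q}|\psi\rangle=X^{q}F^{\dagger}|\psi\rangle$, using $F^{\dagger}ZF=X$. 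Thus the columns of $\mathcal M$ form an $X$-orbit of $F^{\dagger}|\psi\rangle$, so $\mathcal M$ is circulant as well, which yields exactly the required arrangement.

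For the converse, assume all $M_{q}$ and $\mathcal M$ are circulant. Circulance of $M_{q}$ means its columns are $X^{p}v_{q}$ for some first column $v_{q}$, so the SIC vectors are $\{X^{p}v_{q}\}$. The columns of $\mathcal M$ are $\vec\lambda(M_{q})/\sqrt d=F^{\dagger}v_{q}$, and circulance of $\mathcal M$ forces them to be an $X$-orbit, i.e.\ $F^{\dagger}v_{q+1}=X\,F^{\dagger}v_{q}$ for every $q$. Applying $F$ and using $FXF^{\dagger}=Z$ gives $v_{q+1}=Z\,v_{q}$, whence $v_{q}=Z^{q}v_{0}$. Setting $|\psi\rangle=v_{0}$, the SIC vectors become $X^{p}Z^{q}|\psi\rangle$, which is precisely a Weyl--Heisenberg orbit, closing the equivalence.

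The only delicate point, and the one to get right, is the bookkeeping of the Fourier transform: the \emph{outer} circulance of $\mathcal M$ is a shift ($X$) covariance among the eigenvalue vectors, and it is exactly the intertwining $FXF^{\dagger}=Z$ that converts it into the \emph{phase} ($Z$) covariance $v_{q}=Z^{q}v_{0}$ of the first columns, thereby supplying the second Weyl--Heisenberg generator while the inner circulance supplies the first. One must also be consistent about the chirality conventions in $\vec\lambda(M)/\sqrt d=F^{\dagger}v$ and in $F^{\dagger}ZF=X$, since a mismatched choice would replace circulance of $\mathcal M$ by an anti-circulant (Hankel-type) structure; with the consistent conventions above the two shifts align. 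The rigidity of circulant matrices then guarantees that no spurious column phases appear, so the reconstructed set is genuinely the full group orbit.
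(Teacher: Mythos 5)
Your proof is correct and follows essentially the same route as the paper's: circulance of each $M_q$ supplies the $X$-part of the orbit, and circulance of $\mathcal{M}$, translated through the Fourier intertwining $FXF^{\dag}=Z$, supplies the $Z$-part, so that the columns become $X^pZ^q|\psi\rangle$. You spell out both directions and the convention bookkeeping more explicitly than the paper does, but the underlying argument is the same.
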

\begin{proof}
The fact that the $d^2$ vectors of a SIC-POVM are arranged as columns in circulant matrices $M_0,\dots,M_{d-1}$ imply that the $d$ vectors within each of these matrices are connected by a shift of its entries. Evenmore, the fact that $\mathcal{M}$ is circulant means that $\vec{\lambda}(M_{j})=X^j\vec{\lambda}(M_0)$, implying that $M_j=Z^jM_0(Z^j)^{\dag}$. Thus, the $d^2$ vectors of the SIC-POVM are given by $|\phi_{jk}\rangle=X^jZ^k|\phi_{00}\rangle$, where $|\phi_{00}\rangle$ is an unknown fiducial state.
\end{proof}
Note that normalized  vectors $\vec{\lambda}(M_j)/\sqrt{d}$, $j=0,\dots,d-1$, are also fiducial states. This is simple to see, as the first row of $M_j$ is connected with the normalized vector $\vec{\lambda}(M_j)/\sqrt{d}$ via the Fourier transform $F$, and the first row of $M_j$ is connected with its first column via $F^2$. Given that $F$ belongs to the Clifford group, it transforms fiducial states into fiducial states. Note that this result is in agreement with Prop. \ref{equi_M}.

A fundamental ingredient to have a frame is the linearly independence of its elements. In the matrix approach, this is equivalent to say that matrices $M_0,\dots,M_{d-1}$ do not have vanishing determinant, i.e. the  vector $\lambda(M_0)$ has non-zero entries. As this vector is also a fiducial state, up to a rescaling factor, it is interesting to wonder whether a fiducial state can have a vanishing entry. Zauner already noted the existence of such fiducial states in dimension 3 \cite{Z99}. Some recent explorations revealed the existence of fiducials with zero entries in dimensions $d=26, 28, 62, 98$ and $228$ \cite{Markus_comment}. For an extended study about linearly dependencies of elements of a SIC-POVM see here \cite{DBBA}. When having a fiducial state $|\phi\rangle$ with zero entries one might consider another fiducial  $U|\phi\rangle$, where $U$ is an element of the Clifford group. Most likely, many of these transformations produce a fiducial state having non-zero entries but, unfortunately, it was not possible to demonstrate it. This property is left as a conjecture. 

There is a further symmetry that one can impose in order to reduce the number of free parameters in SIC-POVM. One might assume that all matrices $M_0,\dots,M_{d-1}$ are hermitian, which holds if and only if $\vec{\lambda}(M_0)/\sqrt{d}$ is a real fiducial state. Along this direction, it has been conjectured that real fiducial states exist in odd dimensions of the form $d=4n^2+3$, $n\in\mathbb{Z}$  \cite{S17}. This conjecture is supported with 18 real fiducials existing in dimensions of such form, the highest dimensional case corresponding to $d=19603$ \cite{ABHGM21}. For instance, in $d=3$, there is a family of real fiducial states \cite{RBSC04}:
\begin{equation}
|\phi(r_0)\rangle=r_0|0\rangle-(r_0 /2 + \sqrt{2-3r_0^2}/2)|1\rangle- (r_0/2-\sqrt{2-3r_0^2}/2)|2\rangle,
\end{equation}
with $1/\sqrt{2}<r_0<\sqrt{2/3}$, inducing the following 1-parametric family of hermitian matrices:
\begin{eqnarray*}
M_j&=&\left(
\begin{array}{ccc}
 0 & a_j & a_j^*\\
 a_j^*&0&a_j\\
 a_j&a_j^*&0
\end{array}
\right),\quad j=0,1,2,
\end{eqnarray*}
where 
\begin{eqnarray*}
a_0&=&\frac{1}{2} \left(\sqrt{3} r_0+i \sqrt{2-3 r_0^2}\right)\nonumber\\ a_1&=&\frac{1}{12} \left(3-i \sqrt{3}\right) \left(\sqrt{6-9 r_0^2}-3 i r_0\right)\nonumber\\
a_2&=&-\frac{1}{12} \left(3+i \sqrt{3}\right) \left(\sqrt{6-9 r_0^2}-3 i r_0\right).
\end{eqnarray*}
Note that all existing real fiducial states are defined in odd dimensions \cite{S17,ABHGM21,K08}. Thus, an interesting question arises: \emph{are there real fiducial states in even dimensions $d$?} Below, a conclusive answer to this question is provided.
\begin{prop}\label{real_fid}
Real fiducial states, covariant under the standard Weyl-Heisenberg group, do not exist in any even dimension.
\end{prop}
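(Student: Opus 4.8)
The plan is to take a putative real fiducial head on: let $|\phi\rangle=\sum_{m}\phi_m|m\rangle$ with all $\phi_m\in\mathbb{R}$ (this is exactly what ``real fiducial'' means here, since by the preceding discussion it is the entries of $\vec{\lambda}(M_0)/\sqrt{d}$ that are required to be real), and show its Weyl-Heisenberg orbit cannot be a SIC-POVM once $d$ is even. By the covariance established in Prop.~\ref{MZauner}, the constellation is $\{X^jZ^k|\phi\rangle\}$, so the defining condition $|\langle\phi_{jk}|\phi_{00}\rangle|^2=1/(d+1)$ is equivalent to
\begin{equation*}
|f(a,b)|^2=\tfrac{1}{d+1},\qquad f(a,b):=\langle\phi|X^aZ^b|\phi\rangle,
\end{equation*}
for every $(a,b)\neq(0,0)$ modulo $d$. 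First I would record the explicit form $f(a,b)=\sum_{n}\overline{\phi_{n+a}}\,\phi_n\,\omega^{bn}$, obtained by letting $X^aZ^b$ act on $|\phi\rangle$ and relabelling the summation index.

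The core of the argument is a phase symmetry forced by reality. Computing $f(-a,b)$ in the same way and shifting the sum by $a$ gives $f(-a,b)=\omega^{ab}\sum_{n}\overline{\phi_n}\,\phi_{n+a}\,\omega^{bn}$; because $\phi$ is real one has $\overline{\phi_n}\,\phi_{n+a}=\overline{\phi_{n+a}}\,\phi_n$, so the two sums coincide and
\begin{equation*}
f(-a,b)=\omega^{ab}\,f(a,b).
\end{equation*}
This is exactly the place where the hypothesis of reality enters, and it is the step I expect to demand the most care: one must track the $X^aZ^b$ ordering and the modular index shift so that the prefactor $\omega^{ab}$ emerges correctly, and avoid slipping in any extra phase convention that the plain covariance of Prop.~\ref{MZauner} does not provide.

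Finally I would exploit that an even $d$ supplies a half-shift which is its own inverse: for $a=d/2$ one has $-a\equiv a\pmod d$, hence $X^{-d/2}=X^{d/2}$ and $f(-d/2,b)=f(d/2,b)$. Substituting into the symmetry and using $\omega^{d/2}=-1$ yields
\begin{equation*}
f(d/2,b)=\omega^{(d/2)b}f(d/2,b)=(-1)^b\,f(d/2,b).
\end{equation*}
Choosing any odd $b$, say $b=1$, forces $2f(d/2,1)=0$, i.e.\ $f(d/2,1)=0$, which directly contradicts $|f(d/2,1)|^2=1/(d+1)\neq0$. Since $(d/2,1)\neq(0,0)$, this single vanishing overlap is incompatible with the SIC property, ruling out a real fiducial in every even dimension and completing the proof. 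The argument is short precisely because all the weight falls on the reality-induced identity of the second paragraph; once that is in hand, the even-dimensional half-shift converts it into an immediate contradiction.
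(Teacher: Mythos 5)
Your proof is correct, and it targets exactly the same quantity as the paper's: the overlap $\langle\phi|X^{d/2}Z|\phi\rangle$, which must vanish for a real state in even dimension yet must have modulus $1/\sqrt{d+1}$ for a fiducial. Where you differ is in how that vanishing is established. The paper expands $|\phi\rangle$ in the Fourier basis, writes the overlap as $\sum_{j}(-1)^jc_jc_{d-j-1}$, and observes that the even-$j$ and odd-$j$ terms cancel pairwise because $d-1$ is odd. You instead stay in the computational basis and first prove the general reality-induced symmetry $f(-a,b)=\omega^{ab}f(a,b)$ of the Weyl--Heisenberg characteristic function $f(a,b)=\langle\phi|X^aZ^b|\phi\rangle$, then exploit that $a=d/2$ is its own negative mod $d$. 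Your route is arguably cleaner and slightly stronger: it isolates where reality enters as a single structural identity valid for all $(a,b)$, and it yields $f(d/2,b)=0$ for \emph{every} odd $b$, i.e.\ $d/2$ vanishing overlaps rather than the single one exhibited in the paper. The computations in your second paragraph (the index shift producing the prefactor $\omega^{ab}$, and the use of $\overline{\phi_n}\phi_{n+a}=\overline{\phi_{n+a}}\phi_n$) check out, so there is no gap.
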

\begin{proof}
Let $|\phi\rangle=\sum_{j=0}^{d-1}c_j|\varphi_j\rangle$ be a real pure state, expanded in the Fourier basis $|\varphi_j\rangle=\frac{1}{\sqrt{d}}\sum_{k=0}^{d-1}\omega^{jk}|k\rangle$, where $\omega=e^{2\pi i/d}$. Taking in to account that  $\langle\phi|=\sum_{j,k}c_j\omega^{jk}\langle k|$, it occurs that
\begin{eqnarray*}
|\langle\phi|X^{d/2}Z|\phi\rangle|^2&=&\frac{1}{d}\sum_{j, j',k,k'}c_j\omega^{jk}\langle k|\sum_{\ell=0}^{d-1}|\ell+d/2\rangle\langle\ell|\sum_{m=0}^{d-1}\omega^m|m\rangle\langle m|c_{j'}\omega^{j'k'}|k'\rangle\nonumber\\
&=&\frac{1}{d}\sum_{j, j',k,k',\ell=0}^{d-1}c_jc_{j'}\omega^{jk+j'k'+\ell}\langle k|\ell+d/2\rangle\langle\ell|k'\rangle\nonumber\\
&=&\sum_{j=0}^{d-1}(-1)^jc_jc_{d-j-1}\nonumber\\
&=&(c_0c_{d-1}+c_2c_{d-3}+\dots+c_{d-4}c_{3}+c_{d-2}c_{1})-\\&&(c_1c_{d-2}+c_3c_{d-4}+\dots+c_{d-3}c_{2}+c_{d-1}c_{0})\nonumber\\
&=&0,
\end{eqnarray*}
for every even dimension $d$. Note that for SIC-POVM, the above overlap should be equal to $1/(d+1)$. As $|\phi\rangle$ is an arbitrary real state, there is no real fiducial state with respect to the standard WH group in any even dimension $d$.
\end{proof}
Now, a hidden symmetry is revealed for the amplitudes of fiducial states.
\begin{prop}\label{prop:lambdas}
Any fiducial state $|\phi\rangle$, with respect to a Weyl-Heisenberg covariant SIC-POVM, satisfies
\begin{equation}\label{reduction}
\sum_{j=0}^{d-1}|\langle j|\phi\rangle|^2|j\rangle=F^{\dag}\left(\frac{1}{\sqrt{d}}|0\rangle+\frac{1}{\sqrt{d(d+1)}}\sum_{j=1}^{d-1}e^{i\alpha_j}|j\rangle\right),
\end{equation}
in every finite dimension $d$. Here, $F$ is the discrete Fourier transform of order $d$ and $\alpha_j\in[0,2\pi)$ satisfies $\alpha_{d-j}=-\alpha_j$, for every $j=1,\dots,d-1$. As a consequence, any fiducial state $|\phi\rangle$ depends on $\lfloor (d-1)/2\rfloor+d-1$ free parameters only, i.e., $\lfloor (d-1)/2\rfloor$ amplitudes given by (\ref{reduction}) and $d-1$ complex phases.
\end{prop}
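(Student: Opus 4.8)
The plan is to extract from the defining SIC conditions only the subset coming from the diagonal displacements $Z^m$, and to show that these alone pin down the moduli of the amplitudes exactly as in (\ref{reduction}). Write $\phi_k=\langle k|\phi\rangle$ and let $\vec{p}=\sum_{k=0}^{d-1}|\phi_k|^2|k\rangle$ be the real vector appearing on the left-hand side of (\ref{reduction}). Using $Z^m|k\rangle=\omega^{mk}|k\rangle$, a direct computation gives
\begin{equation}
\langle\phi|Z^m|\phi\rangle=\sum_{k=0}^{d-1}\omega^{mk}|\phi_k|^2=\sqrt{d}\,(F\vec{p})_m,
\end{equation}
so that $F\vec{p}$ is, up to the factor $\sqrt{d}$, the vector of the moments $\langle\phi|Z^m|\phi\rangle$.

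Next I would read off the moduli. Normalization gives $\langle\phi|Z^0|\phi\rangle=1$, hence $(F\vec{p})_0=1/\sqrt{d}$; and the SIC condition $|\langle\phi|Z^m|\phi\rangle|^2=1/(d+1)$ for $m\neq0$ forces $|(F\vec{p})_m|=1/\sqrt{d(d+1)}$. Writing $(F\vec{p})_m=\frac{1}{\sqrt{d(d+1)}}e^{i\alpha_m}$ for $m=1,\dots,d-1$ therefore reproduces the bracket in (\ref{reduction}), and applying $F^{\dag}$ to both sides yields the claimed identity. The phase relation $\alpha_{d-m}=-\alpha_m$ follows from the reality of $\vec{p}$: since each $|\phi_k|^2$ is real, $\overline{(F\vec{p})_m}=\frac{1}{\sqrt{d}}\sum_k\omega^{-mk}|\phi_k|^2=(F\vec{p})_{d-m}$, so $e^{i\alpha_{d-m}}=e^{-i\alpha_m}$.

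Finally, the parameter count follows from the structure just obtained. The moduli $|\phi_k|$ are entirely fixed by the phases $\alpha_1,\dots,\alpha_{d-1}$ subject to $\alpha_{d-m}=-\alpha_m$: for odd $d$ the indices pair up into $(d-1)/2$ independent pairs, while for even $d$ the self-conjugate frequency $m=d/2$ forces $\alpha_{d/2}\in\{0,\pi\}$ and the remaining indices form $d/2-1$ pairs, so in both cases there are exactly $\lfloor(d-1)/2\rfloor$ free continuous parameters among the moduli. Independently, the $d$ complex phases of the amplitudes $\phi_k$ contribute $d-1$ parameters after discarding the irrelevant global phase, for a total of $\lfloor(d-1)/2\rfloor+d-1$.

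The computation is short; the only subtle points are (i) recognizing that merely the $Z^m$ block of the SIC conditions controls the moduli while leaving the amplitude phases free, which is precisely the elusive symmetry being claimed, and (ii) handling the even-dimensional self-conjugate frequency $m=d/2$ correctly in the count, where the constraint $\alpha_{d/2}=-\alpha_{d/2}$ removes that phase as a continuous degree of freedom. Everything else is the elementary fact that the discrete Fourier transform of a real vector is conjugate-symmetric.
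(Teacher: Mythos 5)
Your proof is correct, and it reaches (\ref{reduction}) by a genuinely different route from the paper's. The paper works inside the MUF matrix formalism: it takes the circulant matrix $M_0$ holding one frame of the SIC, forms the Gram matrix $\tilde{K}=M_0^{\dag}M_0$ (circulant, hermitian, with off-diagonal moduli $1/\sqrt{d+1}$), and uses the circulant-eigenvalue relation $\vec{\lambda}(M_0)^*\circ\vec{\lambda}(M_0)=\vec{\lambda}(\tilde{K})=\sqrt{d}\,F^{\dag}\vec{\mu}(\tilde{K})$ together with hermiticity of $\tilde{K}$ to get the phase symmetry $\alpha_{d-j}=-\alpha_j$, finally identifying $\vec{\lambda}(M_0)/\sqrt{d}$ with a fiducial. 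You instead read the same data directly off the diagonal displacements: $\langle\phi|Z^m|\phi\rangle=\sqrt{d}\,(F\vec{p})_m$, with moduli fixed by normalization and the SIC condition, and conjugate symmetry of the Fourier transform of the real vector $\vec{p}$ giving the phase relation. The two arguments are mathematically the same computation in different clothing (the first row of $\tilde{K}$ is, after Fourier conjugation, exactly the list of moments $\langle\phi|Z^m|\phi\rangle$), but yours is more elementary and self-contained, and it makes explicit that only the $d-1$ constraints $|\langle\phi|Z^m|\phi\rangle|^2=1/(d+1)$ are being used; the paper's version buys the embedding of the result into the MUF/circulant framework that is the conceptual thread of the article. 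Notably, the paper itself acknowledges your derivation: it is precisely the ``slightly different way'' attributed to Falk Unger at the end of Section 5, and equivalent to the Wiener--Khintchine observation mentioned right after the proposition. Your handling of the self-conjugate frequency $m=d/2$ in even dimension (forcing $\alpha_{d/2}\in\{0,\pi\}$, hence not a continuous parameter) is a detail the paper's count glosses over, and it is handled correctly.
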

\begin{proof}
Matrix $M_0$ satisfies \begin{equation}\label{M0K_app}
M^{\dag}_0M_0=\tilde{K},
\end{equation} where $\tilde{K}$ is a circulant matrix satisfying
\begin{equation}\label{entriesKtilde_app}
|\tilde{K}_{jk}|^2=\left\{\begin{array}{cc}
1&j=k\\
\frac{1}{d+1}&j\not=k
\end{array}\right..
\end{equation}
Given that $\tilde{K}=F\mathrm{diag}[\vec{\lambda}(\tilde{K})]F^{\dag}$, from (\ref{lambdas_app}) and (\ref{M0K_app}) it occurs that $\vec{\lambda}(M_0)^*\circ\vec{\lambda}(M_0)=\vec{\lambda}(\tilde{K})$. Here, the circle $(\circ)$ denotes the -entrywise- Hadamard product. On the other hand, given that $\tilde{K}$ is circulant, its eigenvalues are related with its first row via the Fourier transform. Thus, from (\ref{entriesKtilde_app}) it holds that
\begin{equation}\label{lambda_k_app}
\vec{\lambda}(\tilde{K})=\sqrt{\frac{d}{d+1}}F^{\dag}\{\sqrt{d+1},e^{i\alpha_1},\dots,e^{i\alpha_{d-1}}\}^T,
\end{equation}
where the fact that the entries of the first row of $\tilde{K}$ are given by $\tilde{K}_{00}=1$ and $\tilde{K}_{0k}=\frac{1}{\sqrt{d+1}}e^{i\alpha_k}$, $k=1,\dots,d-1$ has been used. Furthermore, the matrix $\tilde{K}$ is hermitian from (\ref{M0K_app}), so the phases $\alpha_j$ satisfy the symmetry relation $\alpha_{d-j}=-\alpha_j$, for every $j=1,\dots,d-1$. Note that this is the symmetry that allows to reduce the number of free parameters in fiducial states. Combining (\ref{M0K_app}) and (\ref{lambda_k_app}), an expression for the amplitudes of $\vec{\lambda}(M_0)$ in the computational basis arises:
\begin{equation}\label{ampli_lambda_app}
\{|(\vec{\lambda}(M_0))_0|^2,\dots,|(\vec{\lambda}(M_0))_{d-1}|^2\}=\sqrt{\frac{d}{d+1}}F^{\dag}\{\sqrt{d+1},e^{i\alpha_1},\dots,e^{i\alpha_{d-1}}\}^T.
\end{equation}
Therefore, there are $\lfloor(d-1)/2\rfloor+d-1$ free parameters in $\vec{\lambda}(M_0)$, where $\lfloor(d-1)/2\rfloor$ of them come from the absolute values (\ref{ampli_lambda_app}) and $d-1$ from the complex phases. The proof concludes by noting that $|\phi\rangle=\vec{\lambda}(M_0)/\sqrt{d}$ is a fiducial state, without loss of generality.   
\end{proof}
Prop. \ref{prop:lambdas} can be equivalently obtained by applying the discrete version of the Wiener-Khintchine Theorem \cite{W30} to the fiducial state $|\phi\rangle$. This simple observation has been surprisingly overlooked by the community, whereas the approach to MUF played a fundamental role to straightly arrive to Proposition \ref{prop:lambdas}. It is stressed that some numerical simulations made by Chris Fuchs in low dimensions suggest that the number of free parameters in fiducials might be $3(d-1)/2-1$ for $d$ odd and $3d/2-1$ for $d$ even, having an almost perfect match with Proposition \ref{prop:lambdas}, see page 1258 at Chris Fuchs's \emph{samizdat}  \cite{marcus_comment,F14}, the last paragraph in \cite{ADF14}, and Section IV in \cite{FHS17}.

\subsection{Minimum uncertainty states}
Fiducial states are minimum uncertainty states for a complete set of MUB in prime dimensions \cite{ADF14}. To show this property, let $\{|\varphi^j_k\rangle\}$ be a set of $d+1$ MUB in prime dimension $d$ and  $H_j=-\log_2[\sum_{k=0}^{d-1}(p^j_k)^2]$ be the quadratic R\'enyi entropy \cite{R61} associated to the $j$th MUB, where $p^j_k=|\langle\phi|\varphi^j_k\rangle|^2$. Thus, it holds that \cite{BW07}:
\begin{equation}\label{minH}
\sum_{m=0}^dH_m\geq (d+1)\log_2\left[\frac{d+1}{2}\right],
\end{equation}
for any pure quantum state $|\phi\rangle$. The saturation of inequality (\ref{minH}) defines a minimum uncertainty states if and only if $\sum_{k=0}^{d-1}(p^j_k)^2=\frac{2}{d+1}$, for every $j=0,\dots,d$. It has been shown that this last property is satisfied by any fiducial state $|\phi\rangle$ in every prime dimension $d$ \cite{ADF14}. However, there are further quantum states minimizing (\ref{minH}). An interesting question is the following: \textit{are there further sets of orthonormal bases such that fiducial states are minimum uncertainty states?} The following result, based on Prop. \ref{prop:lambdas}, allows to provide a positive answer to the question, in every dimension where a fiducial state exists.
\begin{prop}\label{product_prob}
Let $|\phi\rangle$ be a $d$-dimensional quantum state  satisfying (\ref{reduction}), $|\xi_0\rangle,\dots,|\xi_{d-1}\rangle$ the columns of any unitary matrix belonging to the Clifford group, and $p_k=|\langle\phi|\xi_k\rangle|^2$. Therefore,
\begin{equation}\label{min_uncertainty}
\sum_{j=0}^{d-1}p_jp_{j+k}=\frac{1+\delta_{k,0}}{d+1}.
\end{equation}
In particular, (\ref{min_uncertainty}) holds for fiducial states $|\phi\rangle$.
\end{prop}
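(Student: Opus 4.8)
The plan is to recognize the left-hand side of (\ref{min_uncertainty}) as the cyclic autocorrelation of a probability vector and to evaluate it through its Fourier spectrum, which is exactly the object that condition (\ref{reduction}) controls. First I would express every $p_k$ in one fixed basis. Since $|\xi_0\rangle,\dots,|\xi_{d-1}\rangle$ are the columns of a Clifford unitary $V$, one has $|\xi_k\rangle=V|k\rangle$, so $p_k=|\langle\phi|V|k\rangle|^2=|\langle k|V^{\dagger}\phi\rangle|^2$ is the computational-basis probability distribution of the state $|\psi\rangle:=V^{\dagger}|\phi\rangle$. Consequently $\sum_j p_jp_{j+k}$ is precisely the cyclic autocorrelation of the vector with entries $|\langle j|\psi\rangle|^2$, and the proposition reduces to a single claim: if $|\psi\rangle$ satisfies (\ref{reduction}), its computational autocorrelation equals $(1+\delta_{k,0})/(d+1)$. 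For a genuine fiducial state this reduction is legitimate for every Clifford $V$, because the Clifford group normalizes the Weyl--Heisenberg group and therefore maps fiducials to fiducials; hence $|\psi\rangle=V^{\dagger}|\phi\rangle$ is again a fiducial and satisfies (\ref{reduction}) by Prop.~\ref{prop:lambdas}. For the bare case $V=\mathbb{I}$ no covariance is needed, since $|\phi\rangle$ is assumed to satisfy (\ref{reduction}) outright.

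Second I would carry out the spectral computation. Write $q_j=|\langle j|\psi\rangle|^2$ and $\hat{q}_m=\sum_{j=0}^{d-1}\omega^{mj}q_j$ for the unnormalized discrete Fourier transform of $q$. Denoting by $w$ the vector inside the parentheses on the right of (\ref{reduction}), so that $q=F^{\dagger}w$ with $w_0=1/\sqrt{d}$ and $|w_m|^2=1/(d(d+1))$ for $m\neq0$, one gets $\hat{q}=\sqrt{d}\,Fq=\sqrt{d}\,w$, whence $\hat{q}_0=1$ (normalization) and $|\hat{q}_m|^2=1/(d+1)$ for every $m\neq0$: the power spectrum of $q$ is flat away from the origin. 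The discrete Wiener--Khintchine theorem \cite{W30}, which uses only that $q$ is real so that $\hat{q}_{-m}=\overline{\hat{q}_m}$, identifies the Fourier transform of the cyclic autocorrelation $R_k=\sum_j q_jq_{j+k}$ with this power spectrum, $\sum_{k=0}^{d-1}\omega^{mk}R_k=|\hat{q}_m|^2$. Inverting,
\begin{equation*}
R_k=\frac{1}{d}\sum_{m=0}^{d-1}\omega^{-mk}|\hat{q}_m|^2=\frac{1}{d}\Bigl(1+\frac{1}{d+1}\sum_{m=1}^{d-1}\omega^{-mk}\Bigr)=\frac{1}{d}\Bigl(1+\frac{d\,\delta_{k,0}-1}{d+1}\Bigr)=\frac{1+\delta_{k,0}}{d+1},
\end{equation*}
which is (\ref{min_uncertainty}). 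In particular $k=0$ gives the index of coincidence $\sum_j p_j^2=2/(d+1)$, i.e.\ the minimum-uncertainty condition appearing in (\ref{minH}).

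I expect the genuine obstacle to lie not in the Fourier computation, which is routine, but in justifying the reduction to the computational basis in full generality. Condition (\ref{reduction}) constrains only the moduli of the amplitudes of $|\phi\rangle$ in one fixed basis, and such a condition is not preserved by a general Clifford unitary (the Fourier gate $F$, for instance, recombines moduli and phases); thus for a state merely assumed to satisfy (\ref{reduction}) the autocorrelation identity can be asserted directly only for $V=\mathbb{I}$. The passage to an arbitrary Clifford basis is precisely where the fiducial hypothesis enters, through the covariance argument above: one invokes that $V^{\dagger}|\phi\rangle$ is again a fiducial, hence again governed by Prop.~\ref{prop:lambdas}, rather than expecting (\ref{reduction}) itself to be Clifford-invariant. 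Keeping this distinction explicit is the delicate point of the argument.
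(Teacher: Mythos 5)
Your proof is correct and follows essentially the same route as the paper's: the computational-basis case is obtained by pairing (\ref{reduction}) with its $X^k$-shift, and your Wiener--Khintchine phrasing is the identical computation (since $\langle w|Z^k|w\rangle=\sum_m\omega^{mk}|w_m|^2$ is exactly the inverse transform of the flat power spectrum, an equivalence the paper itself remarks on right after Prop.~\ref{prop:lambdas}), while the extension to arbitrary Clifford bases is the same covariance argument $p_k=|\langle V^{\dagger}\phi|k\rangle|^2$ with $V^{\dagger}|\phi\rangle$ again fiducial. Your closing caveat --- that for a state merely assumed to satisfy (\ref{reduction}) the identity is guaranteed only for $V=\mathbb{I}$, the fiducial hypothesis being what licenses general $V$ --- is accurate and flags a point the paper's own proof passes over silently.
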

\begin{proof}
For $\{|\xi_j\rangle\}$ equal to the computational basis, (\ref{min_uncertainty}) arises by taking the inner product between the left hand side of (\ref{reduction}) and the same expression after applying operator $X^k$. For the right hand side, it has to be used the fact that $X^kF^{\dag}=F^{\dag}Z^k$. Next, assume that $\{\xi_j\}$ are the columns of any unitary $U$ belonging to the Clifford group, and consider that $|\phi\rangle=U|\phi'\rangle$, where $|\phi'\rangle$ is another fiducial state. Thus, $p_k=|\langle\phi| k\rangle|^2=|\langle\phi| U^{\dag}U|k\rangle|^2=|\langle\phi'|\xi_k\rangle|^2$. The generality of the fiducial state $|\phi'\rangle$ allows to conclude that (\ref{min_uncertainty}) holds for any basis $|\xi_k\rangle=U|k\rangle$, $k=0\dots,d-1$, given by the columns of $U$.
\end{proof}
When considering the identity matrix as the chosen element of the Clifford group,  Proposition \ref{product_prob} reduces to Corollary 2.3 in \cite{K08}, also shown in eq.(10) in \cite{ADF14} and as a special case of Theorem 1 in \cite{F09}. Proposition \ref{product_prob} has the following immediate consequence.
\begin{corol}\label{corol:mus}
Any $d$-dimensional quantum state $|\phi\rangle$ that satisfies (\ref{reduction}) is a minimum uncertainty state with respect to every subset of $d+1$ orthonormal bases formed by the columns of $d+1$ elements of the Clifford group. In particular, the result holds for any fiducial state in every dimension $d$, and for $d+1$ MUB in every prime dimension $d$.
\end{corol}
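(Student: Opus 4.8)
The plan is to reduce the minimum uncertainty criterion directly to the special case $k=0$ of Proposition \ref{product_prob}. Recall from the discussion surrounding (\ref{minH}) that a state $|\phi\rangle$ saturates the entropic bound---equivalently, is a minimum uncertainty state with respect to a given collection of $d+1$ bases $\{|\varphi^m_k\rangle\}$---if and only if the index of coincidence $\sum_{k=0}^{d-1}(p^m_k)^2$ equals $\tfrac{2}{d+1}$ for every $m=0,\dots,d$, where $p^m_k=|\langle\phi|\varphi^m_k\rangle|^2$. Thus the whole statement amounts to verifying this single scalar identity, independently, for each of the $d+1$ chosen bases.

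First I would apply Proposition \ref{product_prob} with $k=0$. For any orthonormal basis $\{|\xi_j\rangle\}$ given by the columns of a Clifford unitary $U$, and for any $|\phi\rangle$ obeying (\ref{reduction}), that proposition yields $\sum_{j=0}^{d-1}p_j^2=\frac{1+\delta_{0,0}}{d+1}=\frac{2}{d+1}$, with $p_j=|\langle\phi|\xi_j\rangle|^2$. This is exactly the per-basis saturation condition. Since the identity holds for \emph{every} Clifford-induced basis separately, choosing any $d+1$ elements of the Clifford group and taking their columns produces $d+1$ bases each satisfying $\sum_k (p^m_k)^2=\tfrac{2}{d+1}$; the associated quadratic Rényi entropies then each equal $\log_2[(d+1)/2]$, so their sum meets the bound (\ref{minH}) with equality.

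To obtain the two ``in particular'' claims I would specialize the choice of Clifford elements. For an arbitrary fiducial state, Proposition \ref{prop:lambdas} guarantees that $|\phi\rangle$ satisfies (\ref{reduction}), so the argument above applies verbatim and holds in every dimension $d$ where a fiducial exists. For the MUB claim in prime dimension $d$, I would invoke the construction recalled in Section \ref{sec:preliminaries}: the $d+1$ eigenbases of $X,Z,XZ,\dots,XZ^{d-1}$ form a maximal set of MUB, and each such basis is the column set of a unitary diagonalizing a monomial Weyl-Heisenberg operator, hence an element of the Clifford group. Taking these $d+1$ Clifford unitaries as the chosen collection recovers the standard complete set of MUB.

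The only step requiring genuine care is this last observation---that the standard $d+1$ MUB in prime dimension are realizable as columns of Clifford group elements. This is a well-known feature of the Weyl-Heisenberg/Clifford formalism, but it must be stated explicitly, since Proposition \ref{product_prob} is proved only for Clifford-induced bases and the corollary would be vacuous for the MUB case without it. Everything else is a direct reading of the $k=0$ instance of Proposition \ref{product_prob} against the definition of minimum uncertainty, so I expect no substantive obstacle beyond this bookkeeping.
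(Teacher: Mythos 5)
Your proposal is correct and follows essentially the same route as the paper: the $k=0$ case of Proposition \ref{product_prob} gives $\sum_j p_j^2=\tfrac{2}{d+1}$ for every Clifford-induced basis, which saturates (\ref{minH}), and the MUB claim reduces to the fact that the eigenbases of $Z,X,XZ,\dots,XZ^{d-1}$ arise as columns of Clifford unitaries. The only difference is that the paper makes your flagged ``step requiring genuine care'' explicit, exhibiting the symplectic Clifford elements $U_{H_k}$ with $H_k=\{\{k,-1\},\{-1,0\}\}$ satisfying $U_{H_k}XZ^kU^{\dag}_{H_k}=\tau^{-k}Z$, whereas you defer to it being well known.
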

\begin{proof}
From Prop. \ref{product_prob}, inequality (\ref{minH}) is saturated for any subset of $d+1$ bases, induced by the columns of Clifford unitaries. In particular, a maximal set of MUB arises from the columns of a subset of elements of the Clifford group in every prime dimension $d$. This is so, because there exists a subset of $d+1$ unitaries $U_0,\dots,U_{d}$ within the Clifford group that diagonalize operators $Z,X,XZ,XZ^2,\dots,XZ ^{d-1}$. Indeed, $Z$ belongs to the Clifford group, $F$ diagonalizes $X$, and the Clifford unitaries $U_{H_1},\dots,U_{H_{d-1}}$, where $H_k=\{\{k,-1\},\{-1,0\}\}$, produce $U_{H_k}XZ^kU^{\dag}_{H_k}=\tau^{-k}Z$, with $\tau=-e^{i\pi/d}$. In particular, the unique eigenvectors bases of operators $Z,X,XZ,XZ^2,\dots,XZ ^{d-1}$ form a maximal set of $d+1$ MUB in every prime dimension $d$  \cite{BBRV01}.
\end{proof}
\medskip

A relation between some classes of fiducial states and mutually unbiased bases can be shown. Assume that $|\phi\rangle$ is a fiducial state, eigenvector of the Zauner's operator
\cite{Z99} 
\begin{equation}\label{Zauner}
\mathcal{Z}=e^{i\pi(d-1)/12}FG,
\end{equation}
where $G$ is a diagonal unitary matrix having diagonal entries $G_{jj}=e^{i\pi(d+1)j^2/d}$, $j=0,\dots,d-1$. Thus, the following result arises.
\begin{prop}\label{triplet}
Any eigenvector of the Zauner operator (\ref{Zauner}) has identical probability distributions with respect to the triplet of MUB given by the columns of  $\{\mathbb{I},\mathcal{Z},\mathcal{Z}^2\}$, in every dimension $d$. In particular, the result holds for fiducial states that are eigenvectors of $\mathcal{Z}$.
\end{prop}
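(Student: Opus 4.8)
The plan is to reduce the entire statement to the elementary observation that an eigenvector of a unitary operator is invariant, up to a global phase, under every power of that operator, and that such a phase is invisible to the Born rule. First I would fix notation: let $|\phi\rangle$ be an eigenvector of $\mathcal{Z}$, and write $\mathcal{Z}|\phi\rangle=\mu|\phi\rangle$. The three bases in question are the columns of $\mathbb{I}$, $\mathcal{Z}$ and $\mathcal{Z}^2$, so the probability distribution of $|\phi\rangle$ with respect to the $j$-th basis has entries $p_k^{(j)}=|\langle\phi|\mathcal{Z}^j|k\rangle|^2$ for $k=0,\dots,d-1$ and $j=0,1,2$. The goal is then precisely to show that $p_k^{(j)}$ does not depend on $j$.

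The key step is the phase cancellation. Since $\mathcal{Z}$ is unitary, $|\mu|=1$; from $\mathcal{Z}|\phi\rangle=\mu|\phi\rangle$ together with $\mathcal{Z}^{\dagger}\mathcal{Z}=\mathbb{I}$ one obtains $\mathcal{Z}^{\dagger}|\phi\rangle=\mu^{-1}|\phi\rangle$, whose adjoint reads $\langle\phi|\mathcal{Z}=\mu\langle\phi|$. Iterating gives $\langle\phi|\mathcal{Z}^j=\mu^j\langle\phi|$ for every $j$. Therefore $p_k^{(j)}=|\mu^j|^2\,|\langle\phi|k\rangle|^2=|\langle\phi|k\rangle|^2=p_k^{(0)}$, which is manifestly independent of $j$. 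Evaluating this for $j=0,1,2$ shows that the three distributions coincide term by term, which is exactly the assertion. The \emph{in particular} clause is then immediate, since the Weyl-Heisenberg fiducial states under consideration are eigenvectors of $\mathcal{Z}$, so they form a special case of the eigenvectors already covered.

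Two points deserve a brief comment rather than genuine work. The description of $\{\mathbb{I},\mathcal{Z},\mathcal{Z}^2\}$ as a \emph{triplet} of mutually unbiased bases rests on the order-three property $\mathcal{Z}^3=\mathbb{I}$ of the Zauner operator, for which the normalizing phase $e^{i\pi(d-1)/12}$ in (\ref{Zauner}) is chosen; this ensures exactly three distinct bases in the cycle, and their pairwise unbiasedness follows from the Fourier-times-diagonal structure of $\mathcal{Z}$ through a standard Gauss-sum magnitude argument. I would simply invoke these facts as context, since the content to be proven is the equality of distributions, not the MUB property itself. I do not anticipate a real obstacle here: the argument uses only that $|\phi\rangle$ lies in an eigenspace of $\mathcal{Z}$, not the sharper SIC condition, and the one thing to record carefully is that the eigenvalue lies on the unit circle, so that $|\mu^j|^2=1$ and the global phases drop out cleanly under the modulus.
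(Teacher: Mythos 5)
Your argument is correct and is essentially the paper's own proof: both rest on the observation that an eigenvector of the unitary $\mathcal{Z}$ acquires only a unimodular phase under $\mathcal{Z}^{\dagger}$ and $(\mathcal{Z}^{2})^{\dagger}$, so $|\langle k|(\mathcal{Z}^{j})^{\dagger}|\phi\rangle|=|\langle k|\phi\rangle|$ for $j=0,1,2$ and the three Born distributions coincide term by term. The only difference is that the paper also explicitly verifies the unbiasedness of the triplet by checking $|\langle j|\mathcal{Z}|k\rangle|^{2}=|\langle j|\mathcal{Z}^{\dagger}\mathcal{Z}^{2}|k\rangle|^{2}=|\langle j|\mathcal{Z}^{2}|k\rangle|^{2}=1/d$, a point you defer to a standard Gauss-sum fact.
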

\begin{proof}
Zauner's operator $\mathcal{Z}$ is unitary, thus having
$|\langle k|\phi\rangle|=|\langle k|\mathcal{Z}^{\dag}\mathcal{Z}|\phi\rangle|=|\langle k|(\mathcal{Z}^2)^{\dag}\mathcal{Z}^2|\phi\rangle|$, 
for every $k=0,\dots,d-1$. Furthermore, if $|\phi\rangle$ is eigenvector of $\mathcal{Z}$ then $|\langle k|\phi\rangle|=|\langle k|\mathcal{Z}^{\dag}|\phi\rangle|=|\langle k|(\mathcal{Z}^2)^{\dag}|\phi\rangle|$, for every $k=0,\dots,d-1$. This implies that $|\phi\rangle$ has identical probability distributions in the three orthonormal bases $\{k\}$, $\{\mathcal{Z}|k\rangle\}$ and $\{\mathcal{Z}^2|k\rangle\}$. On the other hand, from (\ref{Zauner}) it is simple to check that $|\langle j|\mathcal{Z}|k\rangle|^2=|\langle j|\mathcal{Z}^{\dag}\mathcal{Z}^2|k\rangle|^2=1/d$ and $|\langle j|\mathcal{Z}^2|k\rangle|^2=1/d$, for every $j,k=0,\dots,d-1$. Therefore, the three bases are mutually unbiased.
\end{proof}
For $d=2$, any eigenvector of $\mathcal{Z}$ is a \emph{MUB-balanced} state \cite{WS07}, in particular when it is a fiducial state. The MUB-balanced property means that a $d$-dimensional quantum state has identical probability distributions with respect to a maximal set of $d+1$ MUB.\bigskip

After completing the work, it was noted that Falk Unger obtained (\ref{reduction}) in a slightly different way (unpublished notes \cite{Unger}). Namely,  if $|\phi\rangle=\sum_j c_j|j\rangle$ is a fiducial state then $e^{i\alpha_k}/\sqrt{d+1} = \langle\phi | Z^k |\phi\rangle = \sum_j |c_j|^2 \omega^{jk}=\sqrt{d}[F\sum_j|c_j|^2|j\rangle]_k$, for every  $k=1,\dots,d-1$.

\section{Conclusions}
This work introduced the ultimate notion of unbiasedness for sets of linearly independent and normalized vectors in $d$-dimensional Hilbert spaces: \emph{mutually unbiased frames} (MUF). These sets are characterized by a constant overlapping that can take any value between the absolute bounds 0 and 1. The overlap reduces to $1/d$ when at least one of the frames is tight, where tightness is equivalent to impose that the frame defines a rank-one positive operator valued measure (POVM). The introduced notion was illustrated by finding the most general pair of MUF composed by two vectors each for a qubit system. The concept of MUF allowed to think a Symmetric Informationally Complete (SIC)-POVM in dimension $d$ as a set of $d$ MUF composed by $d$ vectors each, thus having a common root with the mutually unbiased bases problem. Interestingly, Zauner's conjecture about the existence of fiducial states arose in a natural way when studying sets of $d$ MUF in dimension $d$ (see Prop. \ref{MZauner}). The approach also allowed to unlock two fundamental properties of SIC-POVM: real fiducial states do not exist in any even dimension (see Prop. \ref{real_fid}) and, any unknown $d$-dimensional fiducial state can be parameterized, a priori, with roughly $3d/2$ real variables (see Prop. \ref{prop:lambdas}). Additionally, some classes of quantum pure states, those satisfying (\ref{reduction}), are minimum uncertainty states with respect to $d+1$ orthonormal bases given by the columns of any subset of $d+1$ Clifford operations. This result holds for any fiducial state in every dimension $d$ where a Weyl-Heisenberg covariant SIC-POVM exists (see Prop. \ref{product_prob} and Corol. \ref{corol:mus}). Finally, any eigenvector of the Zauner's operator `looks the same' for three MUB, in every dimension $d$ (see Prop. \ref{triplet}). 

\vspace{1cm}
\textbf{Acknowledgements}
\medskip

The authors kindly acknowledge M. Appleby, A. Araya, I. Bengtsson, C. Fuchs, M. Grassl, F. Unger and K. \.Zyczkowski for their valuable comments. This work is supported by grant FONDECyT Iniciaci\'{o}n nro. 11180474, Chile. FCP and VGA belong to the first year of the PhD Program \emph{Doctorado en F\'isica, menci\'on F\'isica-Matem\'atica}, Universidad de Antofagasta, Antofagasta, Chile.\bigskip

\end{document}